\renewcommand{\P}{\mathbb{P}}
\newtheorem{example}{Example}
\newtheorem{remark}{Remark}
\def\argmax{\text{argmax}}
\def\HH{\mathcal{H}}
\def\KK{\mathcal{K}}
\def\RR{\mathbb{R}}
\def\PP{\mathbb{P}}
\def\OO{\mathcal{O}}
\def\SS{\mathcal{S}}
\def\ZZ{\mathcal{Z}}
\newtheorem{lemma}{Lemma}
\newtheorem{theorem}{Theorem}
\newtheorem{assumption}{Assumption}
\renewenvironment{proof}{\noindent {\bf Proof.\ }}{\hfill{\rule{2mm}{2mm}}}
\renewenvironment{remark}{\noindent {\bf Remark.\ }} {\hfill{\rule{2mm}{2mm}}}
\title[Random Coefficients for Binary Response]{Nonparametric maximum likelihood methods\\ 
for binary response models with random coefficients}
\author{Jiaying Gu}
\address{Department of Economics, University of Toronto, Toronto, Ontario, M5S 3G7, Canada}
\author{Roger Koenker}
\address{Department of Economics, University College London, London, WC1H OAX, UK}
\date{\today} 
\begin{document}
\bibliographystyle{econometrica}

\begin{abstract}
    The venerable method of maximum likelihood has found numerous recent applications in
    {\it nonparametric} estimation of regression and shape constrained densities. For mixture models
    the nonparametric maximum likelihood estimator (NPMLE) of \citeasnoun{KW} plays a central 
    role in recent developments of empirical Bayes
    methods.  The NPMLE has also been proposed by \citeasnoun{Cosslett83} 
    as an estimation method for single index linear models for binary response with random coefficients.
    However, computational difficulties have hindered its application.  Combining recent developments in
    computational geometry and convex optimization we develop a new approach to computation for such
    models that dramatically increases their computational tractability.  Consistency of the method
    is established for an expanded profile likelihood formulation.  The methods are
    evaluated in simulation experiments, compared to the deconvolution methods of \citeasnoun{GK}  
    and illustrated in an application to modal choice for journey-to-work data in the Washington DC area.

\end{abstract}
\maketitle

\section{Introduction}

Statistical models with random coefficients go back at least to \citeasnoun{Fisher}.  Subsequent work
following \citeasnoun{Henderson} and \citeasnoun{Scheffe} has focused primarily on parametric models:
in these hierarchical, or mixed models, random coefficients appear with explicit parametric forms 
specified for their distributions and estimation focused on the hyperparameters of these distributions.
Early on this took the form of linear models with Gaussian random coefficients for which there were
efficient computational methods based on data augmentation.
More recent developments, stimulated by the Bayesian literature and the advent of Markov chain Monte-Carlo
methods have considerably expanded the menu of distributional choices for such models, but the focus
on parametric specifications has remained.  In the binary response literature there has also been
some interest in delving beyond the usual generalized linear model formalism with fixed link functions
like logit or probit, but again random coefficients are typically assumed to take particular 
parametric forms, often Gaussian. 
Our objective in this work has been to explore nonparametric random coefficient models for binary
response thereby relaxing the rather stringent parametric assumptions in most of the prior literature.
As in other nonparametric settings our objective can be motivated by a healthy scepticism 
about a priori parametric specifications and the general spirit of exploratory data analysis. 
Our approach, extending the work of \citeasnoun{Cosslett83}, exploits an exact likelihood formulation
that shares some common features with recent developments in nonparametric maximum likelihood estimation
of mixture models.  There is also a close connection to related work on random coefficient models for
continuous univariate response in the medical imaging literature, \citeasnoun{BFH} and \citeasnoun{FV00}.

We will consider the linear index, random coefficient binary response model, 
\begin{equation}
y_i = 1\{ x_i ^\top \beta_i+ w_i^\top \theta_0 \geq 0\}.  \label{eq: model}
\end{equation}
We observe covariates $x_i \in \mathbb{R}^{d+1}$, $w_i \in \mathbb{R}^p$, and the binary response, $y_i$.
The indicator function, $1\{E\}$, takes the value 1 when $E$ is true, and 0 otherwise. 
We will assume that  the random parameters $\beta_i$ are independent 
of both $x_i$ and $w_i$, and that the $\beta_i$ are independent and identically distributed with an unknown distribution $F_0$.  
The remaining Euclidean parameters, $\theta_0 \in \Theta \subset \RR^p$ are fixed and also unknown.   
Our objective is to estimate the pair $(\theta_0,  F_0)$ by maximum likelihood.
This model encompasses many existing single index binary choice models in the literature. 
When the covariate vector $x_i$ contains only an intercept term so $\beta_i$ is a random scalar 
with known distribution, either logistic or Gaussian, we have the logit or probit model respectively.
When $\beta_i$ is scalar with unknown distribution we have the simplest version of the semiparametric 
single index model of \citeasnoun{KleinSpady}  and \citeasnoun{Cosslett83}.  
When $d \geq 1$ and there are no other covariates 
$w_i$, we have the random coefficient single index model of \citeasnoun{Ichimura},
and \citeasnoun{GK}.

It is immediately apparent that the distribution of the $\beta_i$'s
is only identified up to a scale transformation of the coordinates of $\beta$, so without loss of 
generality we could impose the normalization that
$\|\beta \| = 1$.  An additional identification requirement noted by \citeasnoun{Ichimura} is
that the distribution of $\beta_i$ must have support on some hemisphere. This requirement will be fulfilled 
if the sign of one of the $\beta$ coordinates is known, we will assume henceforth that the last 
entry of $\beta$ is positive, so $\beta_i$'s would be restricted to lie in the northern hemisphere.
Under this additional assumption an alternative normalization simply takes the coordinate with the known 
sign to be 1 and focuses attention on the joint distribution of the remaining coordinates relative to it. 
In our modal choice application for example the price effect can be normalized to 1, and
remaining covariate effects are interpreted relative to the effect of price.
Henceforth, we shall assume that $x_i = (1, z_i^\top, -v_i)$ and $\beta_i = (\eta_i^\top , 1)$ with $\eta \in \RR^d$,
and estimation $F_0 = F_\beta$ is reduced to estimation of $F_0 = F_\eta$.
Finally, it should be stressed from the outset that identification requires sufficient variability 
in $x_i$ to trace out $F_0$ on its full support.  These conditions will be made more explicit in 
the sequel.

We begin with a brief discussion of the simplest case in which there is only a univariate random 
coefficient as considered in the seminal paper of \citeasnoun{Cosslett83},  a formulation that already 
illustrates many of the essential ideas.   This is followed by a general treatment of the multivariate 
setting that draws on recent developments in combinatorial geometry involving ``hyperplane arrangements.''  
A discussion of
identification and consistency is then followed by a brief description of some simulation experiments.
Performance comparisons are made with the deconvolution approach of \citeasnoun{GK}.
We conclude with an illustrative application to modal choice of commuters in the Washington DC area 
based on data from \citeasnoun{Horowitz93}.  Proofs of all formal results are collected in Appendix A.

\section{Univariate Randomness}

In our simplest setting we have only a univariate random component and we observe thresholds, $v_i$,
and associated binary responses,
\begin{equation}
    \label{eq.Coss1}
    y_i = 1 (\eta_i \geq v_i) \quad i = 1, \ldots , n,
\end{equation}
with $\eta_i$'s drawn iidly from the distribution $F_\eta$ and independently of the $v_i$.  
In economic applications, with $v_i$ taken as a price the survival curve, $1 - F_\eta (v) $ can be interpreted
as a demand curve, the proportion of the population willing to pay at pricce, $v$.  More generally, the single index model, 
might express $v_i = v(w_i , \theta)$ depending on other covariates, $w_i$ and unknown parameters, $\theta$.  
This is the context of \citeasnoun{Cosslett83} who focuses 
most of his attention on estimation of the distribution $F_\eta$ employing the nonparametric
maximum likelihood estimator (NPMLE) of \citeasnoun{KW}.   Estimation of the remaining parameters can be carried
out by some form of profile likelihood, but we will defer such considerations. 
In biostatistics \eqref{eq.Coss1} is referred to as the current status model:  with the inequality 
reversed we observe inspection times,
$v_i$, and a binary indicator, $y_i$, revealing whether an unobserved event time, $\eta_i$, has occurred 
prior to its associated inspection time, $v_i$.  Again, the objective is to estimate the distribution of 
the event times, $F_\eta$, by nonparametric maximum likelihood as described, for example, by \citeasnoun{GJW}.  

The geometry of maximum likelihood in this univariate setting is quite simple and helps to establish a
heuristic for the general multivariate case.  To illustrate the critical role of the intervals between
adjacent order statistics of the $v_i$ under the standard convention of the current status model, we can write,
\[
 \PP (y = 1 | v) = \int 1 (\eta \leq v) dF_\eta (\eta).
\]
Given a sample $\{ (y_i, v_i) \; i = 1, \ldots , n\}$, this relation defines $n+1$ intervals as illustrated
for $n = 10$ in the upper panel of Figure \ref{fig.Oned}.  
Each point defines a half line on $\RR$; if $y_i = 1$ then the
interval is $R_i = (-\infty, v_i ]$, while if $y_i = 0$ the interval is $R_i = (v_i , \infty )$.  
Neglecting for the moment the possibility of ties, 
the $R_i$'s form a partition of $n + 1$ intervals of $\RR$, 
We will denote these intervals by $I_j$, for $j = 1, \ldots , n+1$; they can be either closed, open or
half open.  We now define the associated counts,
\begin{align*}
c_j & = \sum_{i=1}^n 1\{R_i \cap I_j \neq \emptyset\}\\
    & =\begin{cases}
\sum_{i: y_i = 0} 1\{v_i < v_{(j)}\} + \sum_{i: y_i = 1} 1\{v_i \geq v_{(j)}\} & \mbox{for}\quad  1\leq j \leq n\\
\sum_{i: y_i = 0} 1\{v_i \leq v_{(n)}\} & \mbox{for} \quad  j = n+1
\end{cases}
\end{align*}
where $v_{(j)}$ is the $j$-th order statistic of the sample $\{v_1, \dots, v_n\}$. 
Now suppose we assign probability mass, $p_j$ to each of the intervals $I_j$, so that 
$p = \{p_1,\dots, p_{n+1}\}$ in the $n$ dimensional unit simplex, $\SS_n$.  
Then the contribution of the $i$-th data point to the likelihood is 
$\sum_{j} p_j 1\{R_i \cap I_j\neq \emptyset\}$.
\begin{figure}[bp]
    \begin{center}
    \resizebox{.65\textwidth}{!}{\includegraphics{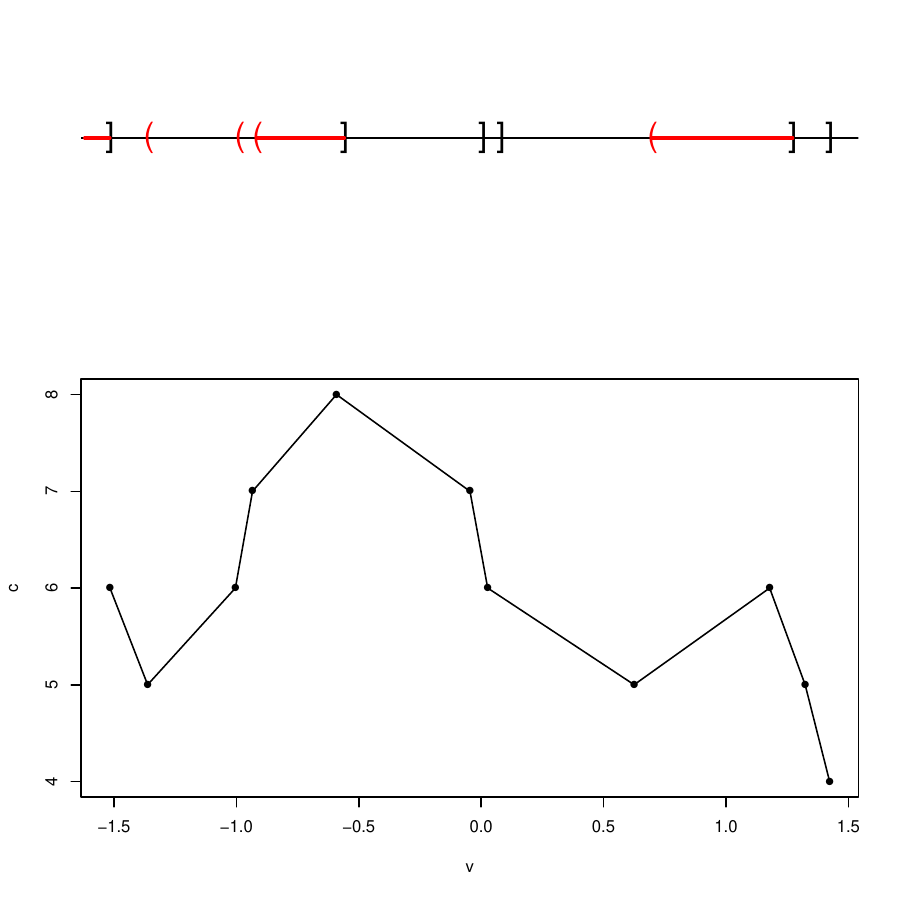}}
    \end{center}
    \caption{\small Intervals for a sample of 10 observations in the univariate model. 
    Values of $v_i$ corresponding to $y_i = 1$ are marked as $]$ in black and those corresponding to $y_i = 0$ are marked as $($ in red.  The three intervals demarcated by the heavier (red) segments are local maxima,  the number of
    counts for each interval is illustrated in the lower portion of the figure.
    }
    \label{fig.Oned}
\end{figure}
The nonparametric maximum likelihood estimator of $F_\eta$ has the following essential features:
\begin{enumerate}
   \item Since the $p_j$'s are assigned to intervals, it does not matter where mass is located 
       within the intervals, as long as it is assigned to a point inside, in this sense the
       NPMLE, $\hat F_\eta$, is defined only on the sigma algebra of sets formed from the intervals, $I_j$. 
       By convention we could assign the mass to the right end of each interval, but we should remember
       that this is only a convention.  The MLE assigns mass to sets, not to points.
   \item Potential non-zero elements of $p$ correspond to intervals $I_j$ with corresponding 
       $c_j$ being a local maximum.  Were this not the case, the likelihood could always be
       increased by transferring mass to adjacent intervals containing larger counts.
       Figure \ref{fig.Oned} plots values for the vector $c$. 
       It is quite efficient to generate the vector $c$ and find these local maxima even 
       when $n$ is very large.
\end{enumerate}

\begin{figure}[bp]
    \begin{center}
    \resizebox{.75\textwidth}{!}{\includegraphics{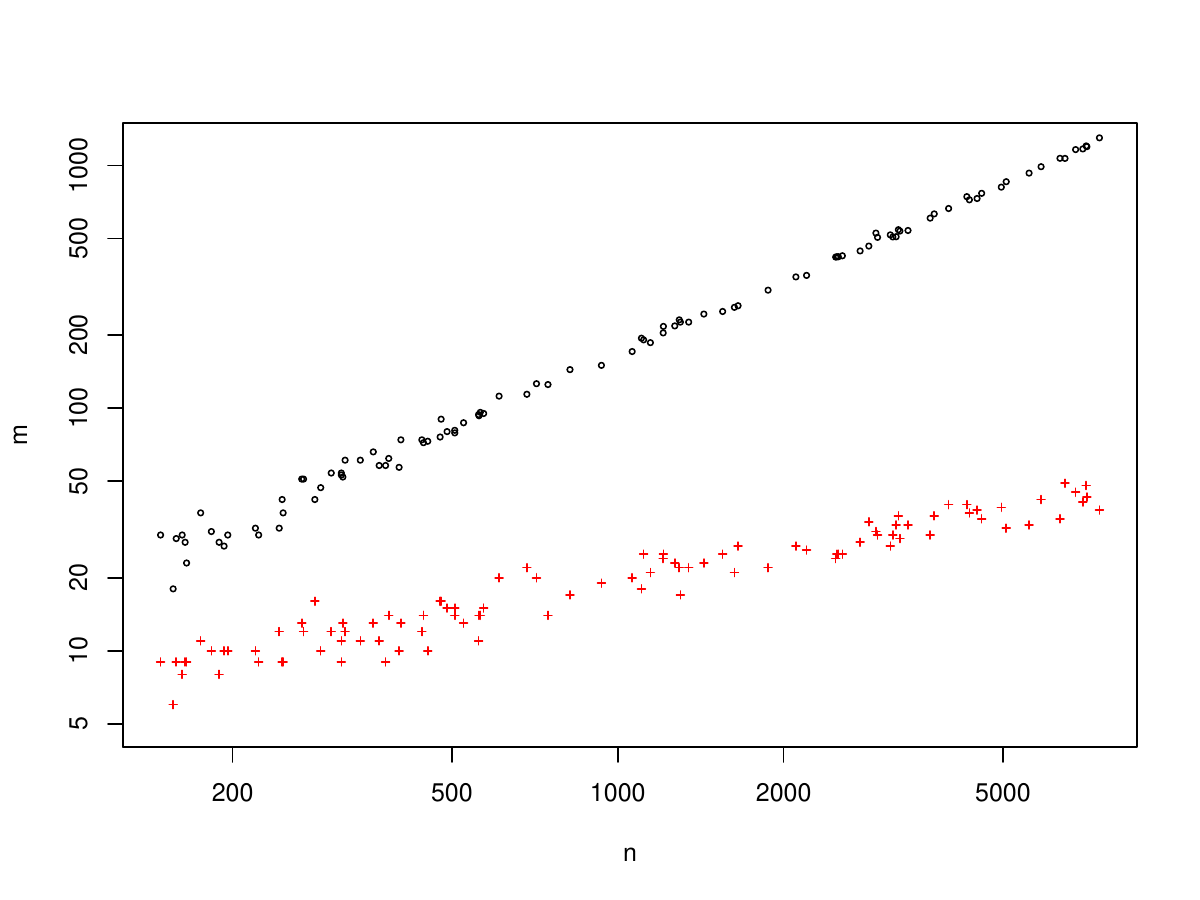}}
    \end{center}
    \caption{\small Number of local maxima of interval counts (black circles) and number 
    of points of support of the optimal NPMLE solution (red pluses).  The local maxima grow
    essentially linearly with sample size, while the number of support points in the NPMLE
    grow roughly logarithmically.
    }
    \label{fig.Fig2a}
\end{figure}

These features suggest a convex optimization strategy for estimation of $F_\eta$.
    Since mass needs only to be assigned to the right endpoint of intervals that correspond to
    local maxima of the counts, we only need consider those intervals and any point within 
    these locally maximal intervals may be considered as potential support points.
    This serves as a dimension reduction device compared to considering all $n$ original data points.
    However, in our experience, the number of potential support points of the distribution $F_\eta$
    identified in this manner still grows linearly with the sample size $n$, while the number of
    optimal support points identified by the maximum likelihood estimator grows more slowly.
    To determine which of our locally maximal intervals deserves positive mass and if so how much, 
    we must solve,
    \begin{equation}
    \underset{f \in \mathcal{S}_{m-1}}{\max}\Big\{  \sum_{i=1}^n \log g_i \mid Af = g\Big\}
    \end{equation}
    where $f = (f_j)$ denotes the mass assigned to the $j$-th order statistic of our reduced set of potential 
    $v$ of cardinality $m$.  The $i$-th coordinate of $Af$ equals to $\sum_j 1\{ v_{(j)} \leq v_i\}f_j$ if $y_i = 1$ and 
    $\sum_j 1\{ v_{(j)} > v_i\}f_j$ if $y_i = 0$.   As first noted by \citeasnoun{Cosslett83}, this problem 
    turns out to be a special instance of the nonparametric maximum likelihood estimator described in \citeasnoun{KW}.
    \citeasnoun{GJ} stress the characterization of the NPMLE in this setting as a convex minorant, 
    thereby linking it to the celebrated Grenander estimator of a monotone density;  the convex optimization
    formulation given here is equivalent as established in \citeasnoun{GW}.
    Noting that the problem is strictly concave in the $g_i$ and subject only to linear equality and
    inequality constraints, it can be solved very efficiently, as noted by \citeasnoun{KM},  
    by interior point methods as implemented for 
    example in Mosek, the optimization framework developed by \citeasnoun{Mosek}.  When this is carried out
    one finds that the number of support points of the estimated distribution, $\hat F_\eta$, is considerably 
    smaller than the number of local maxima identified as candidates.  This is illustrated in Figure 
    \ref{fig.Fig2a}, where  we have generated standard Gaussian $v_i$'s and $\eta_i$'s, for samples
    of size, $n = \exp (\xi)$ with $\xi$ drawn uniformly on the interval $[5,9]$;  
    round black points indicate the number of local
    maxima, while red plus points depict the number of optimal NPMLE support points.  While 
    the number of local maxima grow essentially linearly in the sample size, $n$, the number of positive 
    mass points of the NPMLE grows more slowly, roughly like $\sqrt{n}$. 
    This slow growth in
    the number of mass points selected by the NPMLE is consistent with prior experience with related
    methods for estimating smooth mixture models as described in \citeasnoun{KM} and \citeasnoun{ProbRob}.

    When we admit the possibility of ties in the $v_i$'s increased care is required to correctly
    count the number of observations allocated to each of the intervals.  This is especially true
    when conflicting binary responses are observed at tied values of $v$.  In such cases it is
    convenient to shift locations, $v_i$ of the tied $y_i = 0$ observations slightly thereby
    restoring the uniqueness of the intervals.

\section{Multivariate Randomness}
The convenience of the univariate case is that there is a clear ordering of $v_i$ on $\mathbb{R}$, 
hence the partition is very easy to be characterized and enumerated. This seems to be lost once we 
encounter the multivariate case, however the bivariate current status and bivariate interval censoring
models considered by \citeasnoun{GJ} provide a valuable conceptual transition in which the intervals
of the univariate setting are replaced by rectangles in the bivariate setting.  \citeasnoun{Maathuis}
describes an effective algorithm for these models that shares some features with our approach.

To help visualize the geometry of the NPMLE in our more general setting with multivariate random coefficients 
we will first consider the bivariate case without any auxiliary covariates, $w_i$, and maintain the 
small sample focus of the previous section.  
Now, $x_i = (1, z_i , -v_i )^\top$ where $z_i$ is a random scalar and $\eta_i \sim F_0$. 
The binary response is generated as,
\[
\PP(y_i = 1 | x_i) = \PP(\eta_{1i} + z_i \eta_{2i} - v_i \geq 0 ).
\]
Each pair, $(z_i , v_i )$, defines a line that divides $\RR^2$ into two half-spaces, an ``upper''
one corresponding to realizations of $y_i = 1$, and a ``lower'' one for $y_i = 0$.  Let $R_i$ denote
these half-spaces and $F(R_i)$ be the probability assigned to $R_i$ by the distribution
$F$.  Our objective is to estimate the distribution, $F_0$.  The log likelihood of the 
observed sample is,
\[
\ell(F)= \sum_{i=1}^n \log F( R_i).
\]
As in the univariate case, the $R_i$ partition the domain of $\eta$, however, now rather than
intervals the partition consists of polygons formed by intersections of the $R_i$ half-spaces. 
Adapting our counting method for intervals to these polygons, we seek to identify polygons whose
counts are locally maximal.  Within these maximal polygons the data is uninformative so again 
there is some inherent ambiguity about the nature of the NPMLE solutions.  As in the one dimensional
case this ambiguity can be resolved by adopting a selection rule for choosing a point within each polygon.
As long as there is sufficient variability in the pairs $(z_i , v_i )$'s this ambiguity will vanish as 
the sample size grows as we will consider more formally in Section \ref{sec.asymptopia}.

Figure \ref{fig.maxint} illustrates this partition with $n = 5$. 
The data for this example are given in Table \ref{tab: n=5}. 
 \begin{table}[ht]
	 \centering
	 \begin{tabular}{rrrrr}
		 \hline
		 & (Intercept) & $z_{i}$ & $v_{i}$ & $y_i$ \\ 
		 \hline
		 $i=1$ & 1.00 & 0.41 & 1.22 &1 \\ 
		 $i=2$ & 1.00 & 0.40 & 0.36&0 \\ 
		 $i=3$ & 1.00 & 0.17 & 0.24 &1\\ 
		 $i=4$ & 1.00 & -0.79 & 0.99&0 \\ 
		 $i=5$ & 1.00 & -0.94 & 0.55 &0\\ 
		 \hline
	 \end{tabular}
\caption{Data for a toy example with $n = 5$}
 \label{tab: n=5}
 \end{table}

The half-spaces $\{R_1,\dots, R_n\}$ partition $\RR^2$ into disjoint polygons 
$C_1,\dots, C_M$.  Since for each $i$, it must be that $C_m \cap R_i = \emptyset$ 
or $C_m \cap R_i=C_m$, we can represent 
\[
F( R_i ) = \sum_{j=1}^{M}p_j 1\{C_j \subset R_i\}
\]
where $p_j$ denotes the probability assigned to $C_j$, and we can express the 
log likelihood in terms of these probabilities, $p = (p_j)_{j=1:M}$,   
\[
\ell (p) = \sum_{i=1}^{n} \log \Big(\sum_{j=1}^{M} p_j 1\{C_j \subset R_i\}\Big).
\]
This log likelihood is strictly concave in the $n$-vector of half-space probabilities, 
$q = (F(R_i))$, which is restricted to lie in a convex set and is, therefore, 
uniquely determined.  Uniqueness of the $M$ vector $p$ is, however, a more delicate
issue.  In Appendix \ref{app.nonu} we describe a counterexample for which there is a
convex set of solutions in $p$ each element of which determines the same vector $q$.
The vector $p$ must lie in the $M-1$ dimensional unit simplex and satisfy $q = Ap$ 
for the $n$ by $M$ Boolean matrix, $A = (1(C_j \subset R_i))$. When $M > n$ there is
the possibility that $p$ may be non-unique as noted earlier by \citeasnoun{GG94} and
\citeasnoun{Lindsay.95}.  Just as in the univariate setting
where we need not consider {\it all} intervals only those with locally maximal counts, 
we now need only consider polygons with locally maximal counts,  thereby reducing the dimension, 
$M$, of the vector $p$.  Often, this initial step will reduce $M$ to less than $n$; when 
it fails to do so the non-negativity constraints on the components of $p$ in the convex
optimization of the log likelihood tend to accomplish this.

We now illustrate how to find the maximal polygons for our toy example with $n=5$. 
Figure \ref{fig.maxint} shows that there are three maximal polygons, all shaded and 
denoted as $\{C_1,C_2,C_3\}$. Polygon $C_1$ is the intersection of $\{R_1,R_3,R_4,R_5\}$, 
$C_2$ is the intersection of $\{R_1,R_2,R_4,R_5\}$ and $C_3$ is the intersection of $\{R_1,R_2,R_3\}$. 
Each is locally maximal in the sense that they are formed by more intersecting half-spaces 
than any of their neighbours.  The maximum likelihood estimator for $p$ is defined as any maximizer of,
\[
\underset{p\in \mathcal{S}_2}{\max} \Big\{(p_1+p_2+p_3) (p_2+p_3) (p_1+p_3)(p_1+p_2)(p_1+p_2)\Big\}
\]
which leads to the unique solution, $p_1 = p_2 = 1/2$ and $p_3 = 0$, and the optimal 
log-likelihood of $\log(1/4) = -1.3863$.  We should again stress that although the mass associated
with the two optimal polygons is uniquely determined by maximizing the likelihood, the position
of the mass is ambiguous, confined only to the regions bounded by the two polygons.

\begin{figure}
	\label{fig.maxint}
	\includegraphics[scale=0.5]{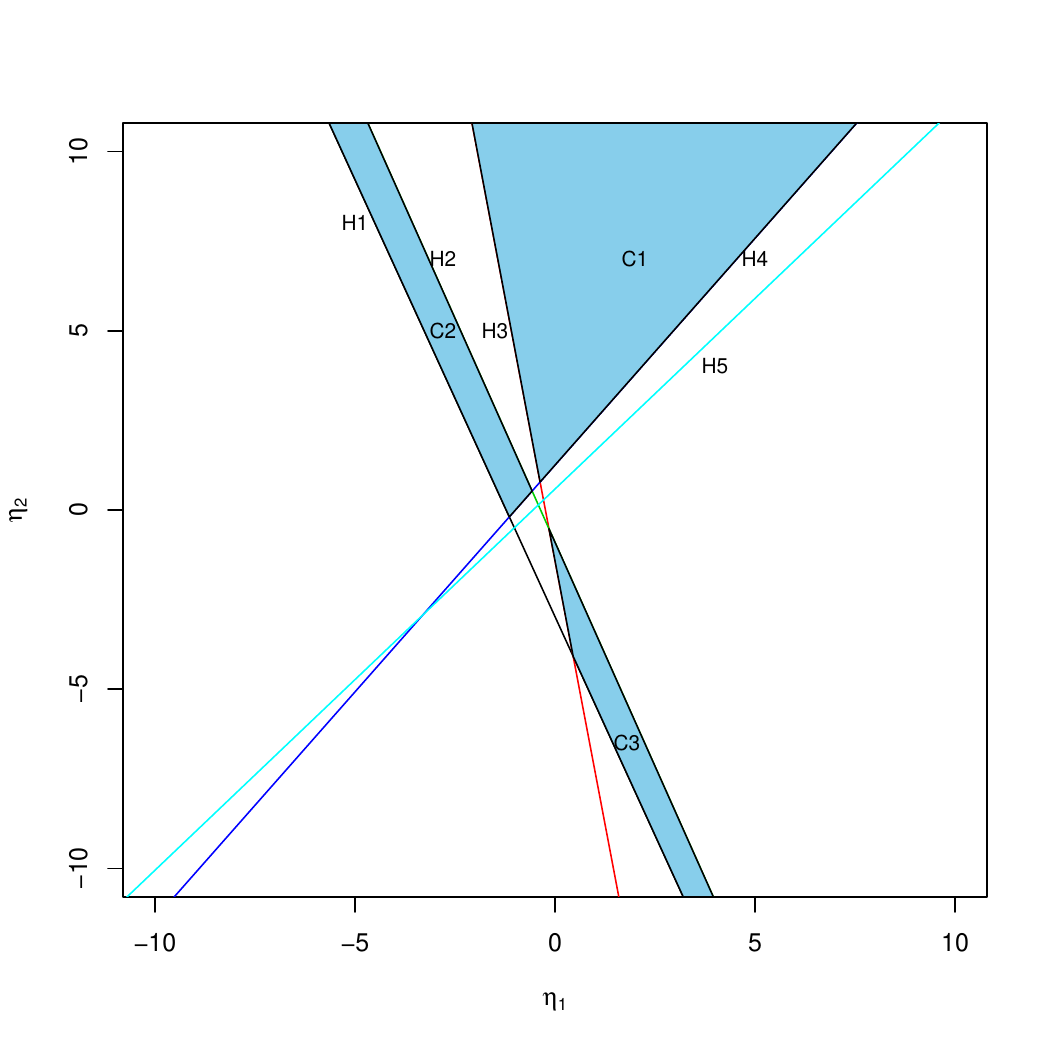}
	\caption{The polygons defined by a toy example with $n = 5$.}
\end{figure}

As $n$ grows the number of maximal polygons grows rapidly and finding all of them in the informal 
manner described above quickly becomes impractical.  Fortunately, there is an extensive, relatively
recent, combinatorial geometry literature on ``hyperplane arrangements'' that allows for efficient and 
tractable enumeration of the partition induced by any set of n hyperplanes in $\mathbb{R}^d$.  

\subsection{Hyperplane Arrangement and Cell Enumeration}
Given a set of hyperplanes $H := \{H_1, H_2, \dots, H_n\}$ in $\mathbb{R}^d$, it defines a partition of 
the space. In the computational geometry literature, this partition is called a hyperplane arrangement 
$\mathcal{A}(H)$.  We first characterize the complexity of an arrangement by the following Lemma,
established by \citeasnoun{Zaslavsky}. 

\begin{lemma} \label{lemma: partition}
	The number of cells of an arrangement of $n$ hyperplanes in $\mathbb{R}^d$ is $\OO(n^{d})$. 
\end{lemma}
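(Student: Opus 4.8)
The plan is to establish the sharper, fully quantitative bound $r(n,d)\le\sum_{k=0}^{d}\binom{n}{k}$, where $r(n,d)$ denotes the maximum number of full-dimensional cells over all arrangements of $n$ hyperplanes in $\RR^d$; since the right-hand side is a polynomial in $n$ of degree $d$, this yields the claimed $\OO(n^d)$ at once. I would argue by induction, introducing the hyperplanes one at a time. The base cases are $r(0,d)=1$ (the whole space is one cell) and $r(n,1)=n+1$ ($n$ points cut a line into at most $n+1$ intervals), both of which agree with $\sum_{k=0}^{d}\binom{n}{k}$.

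For the inductive step, suppose the arrangement of $H_1,\dots,H_{n-1}$ has been drawn and we insert $H_n$. A cell of the new arrangement is either a cell of the old one that lies entirely on one side of $H_n$, or one of the two pieces into which $H_n$ splits an old cell. Moreover each old cell is split by $H_n$ into at most two pieces, and each old cell meets $H_n$ in at most one $(d-1)$-dimensional face of the induced partition (two such faces lying in the same old cell would have to be separated by some $H_i$ passing through that old cell, which is impossible). Hence the number of newly created cells equals the number of $(d-1)$-dimensional cells that the traces $H_n\cap H_1,\dots,H_n\cap H_{n-1}$ induce on $H_n\cong\RR^{d-1}$. These traces are affine hyperplanes of $H_n$, at most $n-1$ of them, so their number of cells is at most $r(n-1,d-1)$. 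This gives the recursion $r(n,d)\le r(n-1,d)+r(n-1,d-1)$.

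Finally I would close the induction with Pascal's rule. Assuming $r(n-1,d)\le\sum_{k=0}^{d}\binom{n-1}{k}$ and $r(n-1,d-1)\le\sum_{k=0}^{d-1}\binom{n-1}{k}$, summing the two and regrouping via $\binom{n-1}{k}+\binom{n-1}{k-1}=\binom{n}{k}$ gives $r(n,d)\le\sum_{k=0}^{d}\binom{n}{k}$, and bounding each term by $\binom{n}{k}\le n^{k}/k!\le n^{d}$ shows $r(n,d)\le (d+1)n^{d}=\OO(n^d)$.

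The one place requiring care — and the only place the argument could slip — is the handling of degenerate configurations in the inductive step: hyperplanes parallel to $H_n$, distinct $H_i,H_j$ whose traces on $H_n$ coincide, or empty intersections. None of these produce extra cells; they can only coalesce pieces that would otherwise be distinct, so each merely lowers the relevant count and the inequality $r(n,d)\le r(n-1,d)+r(n-1,d-1)$ still holds. (If one wished to count faces of all dimensions rather than only the top-dimensional cells, essentially the same recursion, applied dimension by dimension, still yields an $\OO(n^d)$ total; but for the NPMLE construction only the full-dimensional cells $C_j$ are needed.)
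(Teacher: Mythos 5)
Your argument is correct, but it is genuinely different from what the paper does: the paper offers no proof at all, simply citing Zaslavsky's 1975 theory of arrangements (which derives the exact cell count of an arbitrary, possibly degenerate, arrangement from the M\"obius function of its intersection lattice) and noting Buck's exact formula $\sum_{i=0}^{d}\binom{n}{i}$ for the general-position case in a remark. You instead give the classical elementary insertion argument: the recursion $r(n,d)\le r(n-1,d)+r(n-1,d-1)$, closed by Pascal's rule, recovers Buck's bound directly and yields $\OO(n^d)$ without any lattice-theoretic machinery. The key step is sound --- your injectivity claim (that distinct old cells crossed by $H_n$ yield distinct cells of the induced arrangement on $H_n\cong\RR^{d-1}$) is exactly what makes the recursion work, and your treatment of degeneracies is right: parallelism, coincident traces, and empty intersections only merge pieces and preserve the inequality. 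What the citation-based route buys the paper is the exact count for degenerate arrangements (relevant to their Section on degeneracy and the reference to Alexanderson--Wetzel); what your route buys is a self-contained, two-page proof of precisely the asymptotic statement the lemma asserts, which is all the complexity analysis of the enumeration algorithm actually uses.
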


\begin{remark}
The worst case occurs when all $n$ hyperplanes are in general position: a hyperplane arrangement
$\HH = \{ H_1, \ldots , H_n \}$ in $\RR^d$ is in general position if for $1 < k \leq  d$
any collection of $k$ of them intersect in a $d - k$ dimensional hyperplane, 
and if $k > d$ any collection of $k$ of them have an empty intersection.  In such cases
the number of cells generated by the $n$ hyperplanes is given by  $\sum_{i=0}^{d} \binom{n}{i}$, 
apparently first proven by \citeasnoun{Buck}.

When $d = 2$ we have lines not hyperplanes.  There are three basic elements in a line arrangement: 
vertices, edges and polygons. Vertices are the zero-dimensional points at which two or more lines intersect. 
Edges are the one-dimensional open line segments or open infinite rays that connect the vertex points. 
Faces, or cells, are the two-dimensional interiors of the bounded or unbounded convex polygons formed
by the arrangement.  If all lines $\{H_1,\dots, H_n\}$ are in general positions, then the number of 
vertices is $\binom{n}{2}$, the number of edges is $n^2$ and the number of cells is 
$\binom{n}{2}+n+1 = \OO(n^2)$. When lines are not in general position, which is likely to occur in 
many empirical settings, the complexity of the line arrangement is characterized in \citeasnoun{AW81}. 
We return to this possibility in Section \ref{sec:Degeneracy} below.
\end{remark}

Our first objective is to enumerate all the polytopes, or cells, formed by a given arrangement, 
denoted as $\{C_1, C_2, \dots, C_M\}$. For each cell $C_j$ we can define a sign vector 
$ s_j \in \{\pm 1\}^{n}$ whose $i$-th element is,
\[
 s_{ij} :=  \begin{cases}
1 & \text{for } Z_i^\top \eta - v_{i}> 0\\
-1 & \text{for } Z_i ^\top \eta -v_{i} < 0,
\end{cases}
\]
where $Z_i := \{1, z_{i}^\top\}^\top$ and $\eta$ is an arbitrary interior point of $C_j$. 
In this form the sign vector ignores the information in $y_i$, but this can easily be 
rectified by flipping the sign of the $i$-th element in the sign vector. 
We will define the modified sign vector to be $\tilde s_j$ with
\[
\tilde s_{ij} := \begin{cases}
1 & \text{for } y_i = 1, Z_i^\top \eta - v_{i} > 0\\
-1 & \text{for } y_i = 1, Z_i^\top \eta - v_{i} < 0\\
1 & \text{for } y_i = 0, Z_i^\top \eta - v_{i} < 0\\
-1 & \text{for } y_i = 0, Z_i^\top \eta - v_{i} > 0.
\end{cases}
\]
Each cell is uniquely identified by its sign vector and an associated interior point $\eta$. 
The interior point $\eta$ is arbitrary, but since the likelihood is determined only by the probability 
mass assigned to each cell, we need only find a valid interior point $\eta_j$ for each polytope $C_j$. 
This can be accomplished by examining solutions to the linear program,
\begin{equation}
\{\eta_j^*, \epsilon_j^*\} = \underset{\eta,\epsilon}{\argmax} \Big\{ \epsilon \; | \; S_j ( Z \eta - v) 
\geq \epsilon \mathbf{1}_n, 0 \leq \epsilon \leq 1 \Big\}.
 \label{eq: LP}
\end{equation}
Here, $S_j$ is a $n \times n$ diagonal matrix with diagonal elements $\{s_{1j}, s_{2j}, \dots, s_{nj}\}$. 
It can be seen that $\eta_j$ is a valid interior point of $C_j$ if and only if $\epsilon_j^* > 0$. 
The upper bound for $\epsilon$ can be changed to any arbitrary 
positive number, thereby influencing  which interior point found as the optimal solution. 

The linear program (\ref{eq: LP}) thus provides a means to check if a particular configuration of the 
sign vector is compatible with a given hyperplane arrangement. A brute force way to enumerate all 
cells in the arrangement is to exhaust all possible sign vectors in the set $\{\pm 1\}^n$, of
which there are $2^n$ elements.
For each element we could solve the linear programming problem and check the existence of a valid 
interior point. This is obviously computationally ridiculous and unnecessary since the maximum number 
of cells generated by a $n$-hyperplane arrangement in $\mathbb{R}^d$ is only of order $n^d$ as stated in Lemma \ref{lemma: partition}.
\citeasnoun{AvisFukuda} were apparently the first to develop an algorithm for cell enumeration 
that runs in time proportional to the maximum number of polygons of an arrangement. 
\citeasnoun{Sleumer} improved upon their reverse search algorithm. More recently, 
\citeasnoun{RadaCerny} have proposesd an incremental enumeration algorithm that is asymptotically 
equivalent to the Avis-Fukuda's reverse search algorithm, but is demonstrably faster 
in finite samples. The most costly component of the Rada-\v{C}ern\'y algorithm involves solving the 
linear programs (\ref{eq: LP}). We  will briefly describe the Rada-\v{C}ern\'y Incremental Enumeration (IE)
algorithm and then discuss a modified version that we have developed that reduces the complexity 
of IE algorithm by an order of magnitude $n$. 

As the name suggests the IE algorithm adds hyperplanes one at a time to enumerate all sign vectors 
of an arrangement in $n$ iterations.  Let $s^k$ denote a sign vector of length 
$k$ in the set of possible vectors  $\{\pm 1\}^k$. In the $k$-th step of the algorithm with 
$1<  k \leq n$, we have as input the sign vectors $s^{k-1}$ for all existing cells formed by 
the first $k-1$ hyperplanes and their associated interior points collected in the set $\eta^{k-1}$;
we will index its elements by $\ell$.  At each iteration we undertake Algorithm \ref{alg:IE}.
\begin{algorithm}[h]
	\SetKwInOut{Input}{input}\SetKwInOut{Output}{output}
	\Input{The existing sets of sign vectors $s^{k-1}$ and interior points $\eta^{k-1}$ and 
	the new hyperplane $H_k$ and associated covariate vector $Z_k$.}
	\Output{The new sets of sign vectors $s^k$ and the interior points $\eta^k$. }
	\begin{enumerate}
		\item[(a)] For each elements in the set $s^{k-1}$, say $s_\ell^{k-1}$, 
		    define the new sign vector $s_\ell^k:= \{s_{\ell}^{k-1}, 
		    2\times 1\{Z_k^\top \eta_{\ell}^{k-1} - v_k> 0\}-1\}\}$. 
		\item[(b)] For each element $s_\ell^k$ obtained from (a), solve the linear program 
		    defined in (\ref{eq: LP})  with the sign vector 
		    $\check{s}_\ell^k := \{s_{\ell}^{k-1}, 1 - 2\times 1\{Z_k^\top \eta_{\ell}^{k-1} - v_k> 0\}\}$, 
		    store the optimal solutions $\check{\eta}_{\ell}^k$ and $\check{\epsilon}_\ell^{k}$.
		\item[(c)] Define the set $s^k:= \{ s_\ell^{k-1}, \forall \ell \} \cup \{\check{s}_\ell^k, \forall \ell \text{ such that }  \check{\epsilon}_\ell^{k}> 0\}$ and the set $\eta^k:= \{\eta_\ell^{k-1}, \forall \ell\} \cup \{\check{\eta}_\ell^k, \forall \ell \text{ such that }  \check{\epsilon}_\ell^{k}> 0\}$.
	\end{enumerate}
	\caption{Iteration of the Incremental Enumeration algorithm}\label{alg:IE}
\end{algorithm}

The algorithm can be initiated with an arbitrary point $\eta_1^1$, as long as it does not fall 
exactly on any of the hyperplane. When the first hyperplane $H_1$ is added, it necessarily partitions 
the space $\mathbb{R}^d$ into two parts. Since $\eta_1^1$ has to belong to one of the parts, 
we define $s_1^1 = 2 \times 1\{Z_1^\top \eta_1^1 - v_1 > 0\} - 1$. For the other half, one solves 
the linear program (\ref{eq: LP}) with $S = s_2^1 = -s_1^1$ and records the solution as 
$\eta_2^1$; defining $s^1 := \{s_1^1, s_2^1\}$ and $\eta^1 := \{\eta_1^1, \eta_2^1\}$ 
which become the input for the first iteration of the IE algorithm. The order of addition of the 
hyperplanes does not influence the final output. 

Figure \ref{fig:nicer} illustrates the idea of the iterative algorithm for a linear arrangement. The input at the 
third iteration is illustrated in the left panel. There are four polygons that partition 
$\mathbb{R}^2$ determined by the arrangement $\{H_1, H_2\}$. The interior points are labeled 
by their associated sign vector represented by the symbols $\pm$. When $H_3$ is added, 
step one  of the algorithm determines $s_{\ell}^3$ for $\ell = 1, \dots, 4$. 
Step two looks for the new polygons created by adding $H_3$, which crosses through polygons 
A,C,D, dividing each into two parts, and leads to the three new polygons E,F and G. 
Polygon B lies strictly on one side of $H_3$, hence it is not divided. 

\begin{figure}
	\centering
	\includegraphics[scale = 0.65]{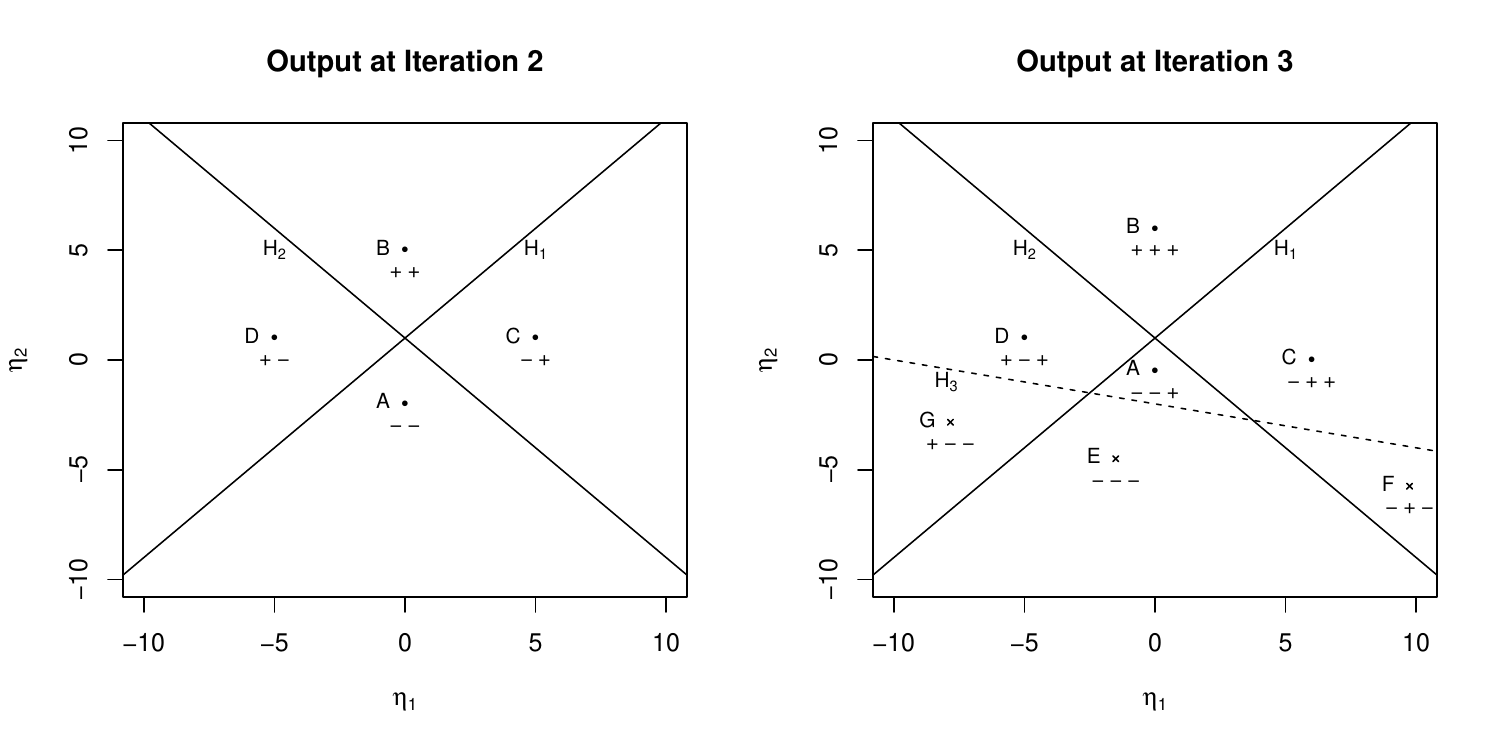}
	\label{fig:nicer}
	\caption{Illustration of the IE algorithm: the left panel plots the output of the second iteration. 
	The two hyperplane involves are $H_1= \{(\eta_1,\eta_2)\mid -\eta_1 + \eta_2 -1 = 0\}$ and 
	$H_2= \{(\eta_1,\eta_2)\mid \eta_1 + \eta_2 -1 = 0\}$. The right panel plots the output of the third iteration, 
	the additional hyperplane $H_3= \{(\eta_1,\eta_2)\mid 0.2 \eta_1 + \eta_2 +2 = 0\}$.}
\end{figure}

The most time-consuming step is (b) in each iteration, although each LP problem can be very quickly 
solved as long as the dimension $d$ is moderate. 
We have to solve $\sum_{i=0}^d\binom{k-1}{i} = \OO((k-1)^d)$ such problems in the worst case as a 
consequence of Lemma \ref{lemma: partition}. 
Together, for all $n$ iterations, this requires $\OO (n^{d+1})$ for any arrangement in $\mathbb{R}^d$. 

\subsection{Dimension reduction based on locally maximal polygons}
Once we have enumerated all the cells, an adjacency matrix $A$ can be created with dimension 
$n \times M$ and elements taking values in $\{0,1\}$. The $j$-th column of the A matrix is the 
corresponding modified sign vector $\tilde s_j$ of cell $C_j$ to incorporate the information
from $y$, except that we replace all $-1$ values
by $0$.  As we have already noted the number of columns of the A matrix, $M$ is of order $n^d$ which 
increases rapidly as $n$ increases,  so it is important to try to reduce the number of candidate cells
as much as possible.  
To achieve a dimension reduction we need to eliminate cells that are not locally maximal, 
that is cells that have neighbours with larger cell counts. 
Define the $M$-vector $c$ whose elements denote the column sums of the adjacency matrix $A$,
corresponding to each cell $\{C_1, \dots, C_M\}$. 
The cells $C_j$ and $C_k$ are neighbours if their sign vector differ by exactly one sign. 
For each cell $C_j$ we can define its set of neighbours $N_j$. 
The cardinality of any set $N_j$ is at most $n$. 
The following result determines the set of columns in the matrix $A$ that will be locally maximal
and therefore constitute candidate supporting cells of the NPMLE. 

\begin{theorem}\label{thm:N1}
Any cell $C_j$ with associated count, $c_j$,  that is strictly smaller than any of the 
counts of the cells in its neighbouring  set $N_j$ is assigned  zero probability mass by the NPMLE.
\end{theorem}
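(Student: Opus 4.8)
The plan is to establish the statement by a local mass--transfer (exchange) argument that rests on the combinatorial fact that two neighbouring cells of the arrangement have cell counts differing by exactly one. First I would record this fact precisely. If $C_j$ and $C_k$ are neighbours, their sign vectors $s_j,s_k\in\{\pm1\}^{n}$ agree in every coordinate except one, say $i_0$; since passing from $s$ to the modified sign vector $\tilde s$ merely flips the coordinates indexed by $\{i:y_i=0\}$ --- a fixed bijection of $\{\pm1\}^n$ that does not depend on the cell --- the vectors $\tilde s_j$ and $\tilde s_k$ also agree except in coordinate $i_0$. Hence the columns $A_{\cdot j}$ and $A_{\cdot k}$ of the adjacency matrix agree in all rows $i\neq i_0$ and take the values $0$ and $1$ in some order in row $i_0$, so $c_k=c_j\pm1$; in particular, if $c_k>c_j$ then $c_k=c_j+1$, $A_{i_0 k}=1$, $A_{i_0 j}=0$, and $A_{i j}=A_{i k}$ for every $i\neq i_0$.

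Next comes the exchange argument. Fix a cell $C_j$ meeting the hypothesis of the theorem; then $N_j$ contains a neighbour $C_k$ with $c_k>c_j$. Let $\hat p$ be any maximiser of $\ell$ over $\mathcal{S}_{M-1}$ and suppose, toward a contradiction, that $\hat p_j>0$. For $t\in(0,\hat p_j]$ define $p^{(t)}$ by $p^{(t)}_j=\hat p_j-t$, $p^{(t)}_k=\hat p_k+t$, and $p^{(t)}_l=\hat p_l$ for all other cells $l$; clearly $p^{(t)}\in\mathcal{S}_{M-1}$. Writing $F_p(R_i)=\sum_{l} p_l A_{il}$, the transfer leaves $F_{p^{(t)}}(R_i)=F_{\hat p}(R_i)$ for every $i\neq i_0$ (since $A_{ij}=A_{ik}$ there) and increases $F_{p^{(t)}}(R_{i_0})$ to $F_{\hat p}(R_{i_0})+t$ (since $A_{i_0 k}=1$, $A_{i_0 j}=0$). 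Because $\ell(\hat p)>-\infty$ we have $F_{\hat p}(R_{i_0})>0$, so $\ell(p^{(t)})=\ell(\hat p)+\log\bigl(1+t/F_{\hat p}(R_{i_0})\bigr)>\ell(\hat p)$ for every $t>0$, contradicting the optimality of $\hat p$. Therefore $\hat p_j=0$.

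I would close with a few remarks rather than additional computation. The same exchange in fact proves the stronger assertion that any cell possessing even a single neighbour of strictly larger count receives zero NPMLE mass --- this is the version actually used for the dimension reduction --- and the theorem as stated follows \emph{a fortiori}. The conclusion holds for every maximiser of $\ell$, which is the right formulation: the induced probabilities $F(R_i)$ are unique, but the distribution over cells need not be. The argument is also insensitive to whether the cell list has already been trimmed, because $C_k\in N_j$ is by definition an occupied cell and hence a genuine column of $A$. The only step needing care is the strict positivity of $F_{\hat p}(R_{i_0})$ at an optimum --- without it $\ell=-\infty$ and $\hat p$ is not a maximiser --- which is exactly what makes the logarithmic increment finite and the inequality strict; past that, the sole substantive ingredient is the ``neighbours differ by one count'' observation, so I do not expect a genuine obstacle here.
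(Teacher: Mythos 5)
Your proof is correct and takes essentially the same route as the paper, whose entire argument is the one-line observation that reassigning the mass of $C_j$ to a neighbour with a larger count strictly increases the likelihood. Your write-up additionally supplies the detail the paper leaves implicit --- that the adjacency columns of neighbouring cells agree in all rows except one, so the transfer perturbs only the single factor $F(R_{i_0})$ and does so favourably --- which is precisely what makes the exchange argument go through.
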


\begin{remark}
    It is perhaps worth noting at this point that the cell, or cells, possessing the globally maximal
    cell count constitute an exhaustive solution to the maximum score problem posed in \citeasnoun{Manski75},
    that is, any element of the set $\{ C_k : k \in \KK \}$ for 
    $\KK = \{ k : c_k = \max \{ c_j : j = 1, \ldots , M\}\}$ 
    is an argmax of the maximum score objective function.   Likewise, any other locally maximal cell is
    a region in which the maximal score estimator may become marooned in the search for a global maximum.
\end{remark}

\subsection{Acceleration of the Rada-\v{C}ern\'y algorithm}
The main motivation of our modification of the IE algorithm is to reduce the number of LP problems 
that need to be solved. To this end we apply a ``zone theorem'' for hyperplane arrangements 
of \citeasnoun{Edelsbrunner93} to show that the number of necessary LP problems in step (b) 
of the IE algorithm can be reduced from $\OO(n^{d+1})$ to $\OO(n^d)$. 
\begin{theorem}\label{thm: zone}
	For any set of $H$ of $n$ hyperplanes in $\mathbb{R}^d$ and any hyperplane $H' \notin H$, denote the total number of cells in the arrangement of $\mathcal{A}(H)$ that intersects with $H'$ as $h$, then 
	\[
	h \leq \sum_{i=0}^{d-1} \binom{n}{i}
	\]
	with the equality achieved when all hyperplanes in the set $H\cup \{H'\}$ are in general position. 

\end{theorem}

Theorem \ref{thm: zone} implies that when we add a new distinct hyperplane to an existing arrangement 
with $k-1$ hyperplanes, it crosses at most $\OO((k-1)^{d-1})$ cells. For line arrangements this 
implies that a newly added distinct line crosses at most $k$ polygons. Only these $k$ polygons will 
generate a new cell, hence in step (b) of the $k$-th iteration of the IE algorithm, we need only 
to solve at most $k$ LPs, in contrast to $k +\binom{k-1}{2}$ provided we can efficiently find the relevant 
$k$ sign vectors for these crossed cells. We first give details for line arrangement in Algorithm \ref{alg:AIE} 
and then discuss how to extend this approach to general case in $\mathbb{R}^d$.

\begin{algorithm}
\SetKwInOut{Input}{input}\SetKwInOut{Output}{output}
\Input{The existing set of sign vectors $s^{k-1}$ and interior points $\eta^{k-1}$ and 
    the new line $H_k$ and associated covariate vector $Z_k$.}
\Output{The new set of sign vectors $s^k$ and interior points $\eta^k$. }
    \begin{enumerate}
	\item[(a)] For each element in the set $s^{k-1}$, say $s_\ell^{k-1}$, define the new sign 
	    vector $s_\ell^k:= \{s_{\ell}^{k-1}, 2\times 1\{Z_k^\top \eta_{\ell}^{k-1} - v_k> 0\}-1\}\}$. 
	\item[(b.1)] Find the set of vertices $\{t_1, t_2, \dots, t_{k-1}\}$, where $t_j$ is the 
	    vertex of the intersection of $H_k$ and $H_j$ for $1 \leq j <k$. 
	    Let $u_j^{k-1} := \{sgn\{Z_i^\top t_j - v_i\}\}_{i = 1, 2, \dots, k-1}$. 
	    By the definition of a vertex, the $j$-th entry of $u_j^{k-1}$ is zero and the 
	    remainder take values in $\{+1, -1\}$. 
	    Define $u_{j+}^{k-1}$ to be identical to $u_{j}^{k-1}$ except that its $j$-th entry 
	    is replaced by $+1$ and $u_{j-}^{k-1}$ to be identical to $u_{j}^{k-1}$ except that its $j$-th 
	    entry is replaced by $-1$. Let $u^{k-1} := \{u_{j-}^{k-1}, u_{j+}^{k-1}\}_{j=1,2,\dots, k-1}$ 
	    and $\ZZ^{k-1} := u^{k-1}\cap s^{k-1}$ and denote the corresponding set of interior points 
	    as $\check{\eta}^{k-1}$. 
	\item [(b.2)] For each element in the set $\ZZ^{k-1}$, say $\ZZ_{\ell}^{k-1}$, solve 
	    the linear program defined in (\ref{eq: LP}) with the sign vector 
	    $\tilde{s}_{\ell}^k := \{\ZZ_\ell^{k-1}, 1 - 2 \times 1\{Z_k^\top \check{\eta}_{\ell}^{k-1} - v_k > 0\}\}$,
	    store the optimal solutions $\tilde{\eta}_{\ell}^k$. 
	\item[(c)] Define the set $s^k:= \{ s_\ell^k, \forall \ell \} \cup \{ \tilde{s}_\ell^k, \forall \ell\}$ 
	    and the set $\eta^k:= \{\eta_\ell^{k-1}, \forall \ell\} \cup \{\tilde{\eta}_\ell^k, \forall \ell \}$.
	\end{enumerate}
	\caption{Accelerated Incremental Enumeration algorithm (d = 2)}\label{alg:AIE}
\end{algorithm}

The Accellerated Incremental Enumeration (AIE) Algorithm \ref{alg:AIE} describes how to achieve this when all lines 
are in general position.  In each iteration, Step (b.1) finds the subset of sign vectors in $s^{k-1}$ whose corresponding 
cell  intersects with the new line $H_k$. Step (b.2) then finds the interior points for all the newly 
created cells by solving the associated LPs.  When lines are not in general position (i.e. more than two 
lines cross at the same vertex) more entries in the 
vector $u_j^{k-1}$ will be zero, and the set $\ZZ^{k-1}$ can be constructed in a similar fashion,
as noted in the next subsection.

The AIE algorithm is easily adapted to the  general case  with hyperplanes in $\mathbb{R}^d$, at least when
the arrangement is in general position.  When a new hyperplane $H_k$ is added, the set of vertices is determined 
by the intersection of $H_k$ and any $d-1$ hyperplanes in the set $\{H_1, H_2, \dots, H_{k-1}\}$. 
When hyperplanes are in general positions, there will be $\binom{k-1}{d-1}$ of these for $k \geq d$. 
Each of these vertices provide sign constraints on the cells that hyperplane $H_k$ crosses, 
which allows us to determine a subset of $s^{k-1}$ to be passed into step (b.2). 
For $1<k<d$ the set of vertices is empty and we proceed to step (b.2) to process all elements in $s^{k-1}$. 

\subsection{Treatment of various forms of degeneracy} \label{sec:Degeneracy}
If the arrangement $H$ is not in general position, Algorithm \ref{alg:AIE} must be adapted to cope with this.
\citeasnoun{AW81} consider cell enumeration for $d = 2$ and $d = 3$, our implementation of the algorithm 
provisionally treats only the $d = 2$ case of line arrangements.  In higher dimensions, degeneracy becomes
quite delicate and constitutes a subject for future research.  Random perturbation of the covariate data
is an obvious alternative strategy for circumventing such degeneracy.  Even for line arrangements there are several 
cases to consider:
\begin{enumerate}
	\item If $H_k \in \{H_1,\dots, H_{k-1}\}$ for any $k$, then (b.1) and (b.2) can be skipped 
	    since if the new line coincides with one of the existing ones then no new cells are created.
	\item If $H_k$ is parallel to any existing lines $\{H_1, \dots, H_{k-1} \}$ the cardinality of 
	    the set of vertices will be smaller than $k-1$, but no modification of the algorithm is required.
	\item If $H_k$ crosses a vertex that already exists, then more than one entry in $u_j^{k-1}$ will be zero. 
	    Suppose there are $a$ such zeros, then the zero entries must be replaced with elements in $\{\pm1\}^a$ 
	    to construct the set $u^{k-1}$. 
\end{enumerate}

\section{Identification and Strong Consistency}\label{sec.asymptopia}

We now introduce formal conditions needed for identification of the parameters of interest 
$(\theta_0, F_0)$ for model (\ref{eq: model}) and to establish consistency of the nonparametric maximum likelihood 
estimator $(\hat \theta_n, \hat F_n)$. As discussed earlier, identification in binary choice random coefficient 
model requires a normalization and to this end we assume that the coefficient of the last variable in 
$x_i = \{1, z_i^\top, -v_i\}^\top$ has coefficient 1 and therefore the model becomes 
$y_i = 1\{w_i ^\top \theta_0 +\beta_{1i}+ z_i ^\top \beta_{-1i} \geq v_i\}$. 
Here $\beta_{1i}$ refers to the first element in $\beta_i$ and the rest of the vector $\beta_i$ 
is denoted as $\beta_{-1i}$. Note that the sign of the coefficient of $v_i$ is identifiable from the data.

\begin{assumption}\label{Assump1}
	The random vectors $(Z, V, W)$ and $\beta_i$ are independent. 
\end{assumption}

\begin{assumption}\label{Assump2}
	The parameter space $\Theta$ is a compact subset of a Euclidean space and $\theta_0 \in \Theta$. 
	Let the set $\mathcal{F}$ be space of probability distributions for $\beta_i$ supported on a 
	compact set in $\mathbb{R}^d$. 
\end{assumption}

\begin{assumption}\label{Assump3}
The random variable $V$ conditional on $(W, Z)$ is absolutely continuous and has full support on $\mathbb{R}$ and the random variables $Z$ conditional on $W$ are absolutely continuous and has full support on $\mathbb{R}^d$. 
\end{assumption}

Versions of the foregoing assumptions are commonly invoked in the semiparametric single index binary choice 
model literature, e.g. \citeasnoun{KleinSpady}, \citeasnoun{Ichimura}. Some relaxation of 
Assumption \ref{Assump1} is possible while still securing point 
identification of $(\theta_0, F_0)$, as noted in the remark following Theorem \ref{thm: identification}. 
The bounded support assumption on $\beta_i$ in Assumption \ref{Assump2} is not needed for 
identification but is convenient for the consistency proof. It can be relaxed at the cost of a slightly longer proof. 
Assumption \ref{Assump3} is the most crucial for the identification argument that requires sufficient 
variability of the covariates to trace out $F_0$ on its full support. Note that it does not require all elements 
in $w_i$ to have an absolutely continuous distribution with full support, in fact $w_i$ can even contain discrete 
covariates.

\begin{theorem}\label{thm: identification}
	Under Assumptions \ref{Assump1}-\ref{Assump3}, $(\theta_0, F_0)$ is identified. 
\end{theorem}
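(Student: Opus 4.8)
The plan is to extract everything from the observable conditional response probability $p(w,z,v):=\PP(y=1\mid W=w,Z=z,V=v)$. By Assumption \ref{Assump1} the random coefficients $\beta$ are independent of $(Z,W)$, so the model $y=1\{w^\top\theta_0+\beta_1+z^\top\beta_{-1}\geq v\}$ gives $p(w,z,v)=\Psi_z(v-w^\top\theta_0)$, where $\Psi_z(t):=\PP_{F_0}(\beta_1+z^\top\beta_{-1}\geq t)$ is the survival function of the projection of $\beta$ onto the direction $(1,z)$. Since $V\mid(W,Z)$ has full support on $\RR$ (Assumption \ref{Assump3}), the entire function $t\mapsto\Psi_z(t)$ on $\RR$ is recovered for a.e.\ $z$ in the support of $Z$ (the compact-support hypothesis on $\beta$ in Assumption \ref{Assump2} is not needed for identification). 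I would then argue in two stages: first pin down $\theta_0$ from the horizontal displacement of $p(\cdot,z,\cdot)$ as $w$ varies, then reconstruct $F_0$ from the family $\{\Psi_z\}_z$ of one-dimensional projections.

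For the first stage, suppose $(\theta_1,F_1)\in\Theta\times\mathcal{F}$ is observationally equivalent, with projection survival functions $\Psi_z^0,\Psi_z^1$, so that $\Psi_z^0(v-w^\top\theta_0)=\Psi_z^1(v-w^\top\theta_1)$ for all $(w,v)$ and a.e.\ $z$. Fixing $z$ and one reference value of $w$ shows $\Psi_z^1(\cdot)=\Psi_z^0(\cdot-a_z)$ for some scalar $a_z$; feeding this back in yields $\Psi_z^0(v-w^\top\theta_0)=\Psi_z^0(v-w^\top\theta_1-a_z)$ for all $(w,v)$. Since $\Psi_z^0$ is non-increasing and not constant, it cannot coincide with a nontrivial translate of itself (a monotone periodic function is constant), forcing $w^\top(\theta_0-\theta_1)=a_z$ for all $w$ in the support of $W$ given $Z=z$. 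Differencing two such values of $w$ and invoking the standard rank condition — that these differences span $\RR^p$, a condition compatible with discrete components of $W$ — gives $\theta_0=\theta_1$, hence $a_z=0$ and $\Psi_z^0=\Psi_z^1$ for a.e.\ $z$.

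For the second stage, $\Psi_z^0=\Psi_z^1$ means the law of $\beta_1+z^\top\beta_{-1}$ coincides under $F_0$ and $F_1$ for a.e.\ $z$, so the characteristic functions of $F_0$ and $F_1$ agree at every point of $\RR^{d+1}$ of the form $(s,sz)$ with $s\in\RR$ and $z$ in a co-null, hence dense, subset of $\RR^d$. The set of such points is dense in $\RR^{d+1}$ and characteristic functions are continuous, so $F_0$ and $F_1$ have the same characteristic function and therefore $F_0=F_1$. This is the Cram\'er--Wold step, and it is where the full-support hypothesis on $Z$ does the essential work: it supplies enough projection directions to pin down the full $(d+1)$-dimensional law rather than merely some of its marginals. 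I expect this to be the substantive part of the proof; the translation argument for $\theta_0$ is routine once Assumption \ref{Assump1} is used to de-confound the shift, the only subtleties being that it must be carried out before the $\Psi_z$ can be isolated and that it relies on the usual — here only implicitly invoked — rank condition on the support of $W$.
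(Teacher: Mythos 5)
Your proof is correct and follows essentially the same route as the paper's: recover the conditional law of the index $w^\top\theta_0+\beta_1+z^\top\beta_{-1}$ from the full support of $V$ given $(W,Z)$, identify $\theta_0$ by varying $w$ at fixed $z$, and then apply a Cram\'er--Wold argument using the full support of $Z$ to recover $F_0$ from its one-dimensional projections. Your write-up is more explicit than the paper's one-line treatment of the $\theta_0$ step (the translation-invariance argument for the monotone survival function), and you correctly observe that both arguments tacitly rely on a rank condition on the support of $W$ that Assumptions \ref{Assump1}--\ref{Assump3} do not state explicitly.
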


\begin{remark}
The full support requirement on $Z$ in Assumption \ref{Assump3} may be relaxed at the 
cost of imposing tail conditions on the distribution of the random coefficients.  See
\citeasnoun{masten2017random} for further details.
If elements of $z_i$ are endogenous but there exists a vector of instruments $r_i$ and a 
complete model relating $z_i$ and $r_i$ is specified, for instance 
$z_i = \Psi r_i  + e_i$, then we can rewrite model (\ref{eq: model}) as 
\[
y_i = 1\{ \beta_{1i} + e_i ^\top \beta_{-1,i} +  r_i^\top \Psi^\top \beta_{-1,i} + w_i ^\top \theta \geq v_i\} 
\]
Thus, we can redefine the random intercept as $\beta_{1i} + e_i ^\top \beta_{-1,i}$ and the remaining
random coefficients accordingly, denoting by $\beta_{-1,i}$ the vector of $\beta$ excluding the first 
component, and we are back to the original model (\ref{eq: model}). There is also a recent literature 
emphasizing on set identification of $(\theta_0, F_0)$ where an explicit model between $z_i$ 
and $r_i$ is not imposed. See \citeasnoun{ChesherRosen14} for a detailed discussion. 

\end{remark}

Having established identification of the model structure $(\theta_0,  F_0)$, we now turn 
our attention to the asymptotic behavior of the maximum likelihood estimator of $(\theta_0,  F_0)$,
\[
(\hat \theta_n, \hat F_n) = \underset{\Theta \times \mathcal{F}}
{\argmax} \frac{1}{n} \sum_{i=1}^{n} y_i \log [ \mathbb{P}_{F}(H(x_i, w_i, \theta))] + 
(1-y_i) \log [1-\mathbb{P}_{F}(H(x_i, w_i, \theta))]
\]
where we denote the set $\{\beta:  x_i^\top \beta + w_i^\top \theta \geq 0\}$ by $H(x_i, w_i, \theta)$. 
For any fixed $n$ and $\theta$, the collection of half-spaces $H(x_i, w_i, \theta)$ defines a partition on 
the support of $\beta$, denoted as $\{C_1, \dots, C_M\}$. Define a matrix $A$ of dimension 
$n \times M$, whose entries takes the form $a_{ij} = 1\{C_j \subset 
H(x_i, w_i, \theta)\}$ for $y_i = 1$ and $a_{ij} = 1- 1\{C_j \subset H(x_i, w_i, \theta)\}$ for $y_i = 0$. 
Let $p$ be a vector in the unit simplex such that $p_j$ corresponds to the probability assigned to the 
polytope $C_j$, the maximum likelihood estimator for $F$ for a fixed $\theta$, denoted as $F_{\theta}$, 
is the solution to the following constrained optimization problem, 
\[
\min \Big \{- \frac{1}{n} \sum_{i=1}^n \log g_i \mid g_i = \sum_j a_{ij} p_j , \sum_j p_j = 1, p_j \geq 0 \Big \}
\]
The dual problem, which is usually more efficiently solved since $M$ is typically much larger than $n$, 
can be shown to be 
\[
\max \Big \{ \sum_{i=1}^n \log q_i \mid \sum_{i=1}^n a_{ij} q_i \leq n \text{ for all j } \Big\}
\]
In this formulation the problem appears to resemble the dual form of the NPMLE for general mixture
models as considered by \citeasnoun{Lindsay83} and \citeasnoun{KM}.
Despite this resemblance, there are several fundamental differences that lead to special features 
of the binary response NPMLE.  First, in classical mixture models the number of constraints in the 
dual problem is typically infinite dimensional.  As a consequence it is conventional to
impose a finite grid for the potential mass points of the mixing distribution to obtain a 
computationally practical approximation.  The number of grid points should typically grow together 
with sample size $n$ to achieve a good approximation.  In contrast, in the binary response setting 
once $n$ is fixed, the matrix $A$ is determined by the arrangement and the solution of $p$ is exact for 
any $n$. There is no need to impose a grid on the support of the parameters which is very convenient 
especially when the dimension of $\beta$ is large.  Second, the arrangement also provides a unique 
geometric underpinning  which leads to an equivalence class of solutions for the maximum likelihood estimator. 
Once the convex program determines the optimal allocation $\hat p$, the data offers no information on 
how the probability masses $\hat p_j$, should be distributed on the polytope $C_j$.  In this sense the
NPMLE $\hat F_n$ yields a set-valued solution; in the application section we illustrate the implications
of this fact for prediction of marginal effects.  The maximum likelihood estimator for $\theta$ is 
found by maximizing the profile likelihood.

We now establish the consistency of the maximum likelihood estimator. The argument follows 
\citeasnoun{KW} and \citeasnoun{Chen17}. Our argument relaxes some of the assumptions employed in prior work, 
in particular Assumption 4, in \citeasnoun{Ichimura}.  
Let $\mathcal{F}_0$ be the set of all absolutely continuous distribution on the support of $\beta$;
$\mathcal{F}_0$ is a dense subset of $\mathcal{F}$. 

\begin{lemma}\label{lemma: continuity}
Under Assumption \ref{Assump1}-\ref{Assump3}, for any given $\gamma = (\theta, F) \in \Theta \times \mathcal{F}_0$, 
let $\gamma_n = (\theta_n, F_n)$ be any sequence in $\Theta \times \mathcal{F}$ such that $\gamma_n \to \gamma$, then 
\[
\underset{\gamma_n \to \gamma}{\lim} \mathbb{P}_{F_n}(H(x, w, \theta_n)) = 
\mathbb{P}_{F}(H(x, w, \theta))\quad  \text{a.s.}
\]
\end{lemma}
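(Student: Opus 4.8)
The plan is to fix a realization $(x,w)$ of the covariates and to prove the pointwise statement $\mathbb{P}_{F_n}(H(x,w,\theta_n)) \to \mathbb{P}_F(H(x,w,\theta))$; under the normalization $x = (1, z^\top, -v)^\top$ one has $x \neq 0$, so this in fact holds for \emph{every} such $(x,w)$ and a fortiori almost surely (the ``a.s.'' only guards the un-normalized form in which $x$ could vanish on a $\mathbb{P}$-null set). The two inputs used are: that $F_n \to F$ in $\mathcal{F}$ is weak convergence, $F_n \Rightarrow F$, and that the limit $F \in \mathcal{F}_0$ is absolutely continuous. Write $H_n := H(x,w,\theta_n)$ and $H := H(x,w,\theta)$, so that $\mathbb{P}_{F_n}(H(x,w,\theta_n)) = F_n(H_n)$; these are closed half-spaces in $\beta$-space sharing the normal direction $x$ with offsets $-w^\top\theta_n \to -w^\top\theta$, and their boundary hyperplane $\partial H = \{\beta : x^\top\beta + w^\top\theta = 0\}$ is Lebesgue-null because $x \neq 0$, hence $F(\partial H) = 0$.

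I would then split by the triangle inequality,
\[
\big| F_n(H_n) - F(H)\big| \;\le\; \underbrace{\big| F_n(H_n) - F_n(H)\big|}_{(\mathrm{I})} \;+\; \underbrace{\big| F_n(H) - F(H)\big|}_{(\mathrm{II})}.
\]
Term $(\mathrm{II})$ is immediate: $H$ is a fixed $F$-continuity set since $F(\partial H) = 0$, so the portmanteau theorem gives $F_n(H) \to F(H)$.

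The work is in $(\mathrm{I})$, where the obstacle is that $H_n$ is a \emph{moving} set, so weak convergence cannot be applied to it directly. I would bound $(\mathrm{I}) \le F_n(H_n \triangle H)$ and note that $H_n \triangle H$ lies in the closed slab $S_n := \{\beta : |x^\top\beta + w^\top\theta| \le \delta_n\}$ with $\delta_n := |w^\top(\theta_n-\theta)| \to 0$. For fixed $\epsilon > 0$, as soon as $\delta_n < \epsilon$ we have $S_n \subseteq S^\epsilon := \{\beta : |x^\top\beta + w^\top\theta| \le \epsilon\}$, a \emph{fixed} closed set, so $\limsup_n (\mathrm{I}) \le \limsup_n F_n(S^\epsilon) \le F(S^\epsilon)$ by the closed-set half of portmanteau. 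Letting $\epsilon \downarrow 0$, continuity from above of the probability measure $F$ gives $F(S^\epsilon) \downarrow F(\partial H) = 0$, whence $\limsup_n (\mathrm{I}) = 0$. Together with $(\mathrm{II}) \to 0$ this proves the lemma.

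The only delicate point is this last argument: one must not apply weak convergence to the shrinking slab $S_n$ itself but to the fixed enclosing slab $S^\epsilon$, and then send $\epsilon \downarrow 0$; it is the absolute continuity of the \emph{limit} $F$ (not of the approximants $F_n$) that forces $F(S^\epsilon) \to 0$. Everything else --- the reduction to a fixed $(x,w)$, the null boundary, the estimate $H_n \triangle H \subseteq S_n$, and the two portmanteau invocations --- is routine; beyond ensuring that $\mathcal{F}_0$ and the a.s.\ statement are the appropriate objects, the full-support part of Assumption~\ref{Assump3} and the compact-support part of Assumption~\ref{Assump2} play no active role in this particular lemma.
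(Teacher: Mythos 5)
Your proof is correct, and it follows the same basic template as the paper's: a triangle-inequality split into a ``$\theta$ moves, measure fixed'' term and a ``measure moves, set fixed'' term, with the portmanteau theorem for the latter and absolute continuity of the limit $F\in\mathcal{F}_0$ for the former. The difference is in which cross term appears and how carefully it is handled. The paper bounds $|\mathbb{P}_F(H(x,w,\theta_n))-\mathbb{P}_F(H(x,w,\theta))|$ and $|\mathbb{P}_{F_n}(H(x,w,\theta))-\mathbb{P}_F(H(x,w,\theta))|$, but these two quantities do not recombine by the triangle inequality to control $|\mathbb{P}_{F_n}(H(x,w,\theta_n))-\mathbb{P}_F(H(x,w,\theta))|$; the decomposition actually needed involves either $\mathbb{P}_{F_n}$ evaluated at the moving set $H(x,w,\theta_n)$ or, as in your version, $|F_n(H_n)-F_n(H)|$, and in either case one must justify a weak-convergence step that is not over a single fixed set. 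Your slab argument --- enclosing $H_n\triangle H$ in the fixed closed slab $S^\epsilon$, applying the closed-set half of portmanteau to get $\limsup_n F_n(S^\epsilon)\le F(S^\epsilon)$, and then sending $\epsilon\downarrow 0$ using absolute continuity of the limit --- is precisely the missing justification, so your write-up is in fact tighter than the paper's at the one delicate point. Your side remarks are also accurate: it is the absolute continuity of $F$ (membership in $\mathcal{F}_0$), not of the $F_n$, that does the work, and the full-support and compact-support hypotheses are not used in this lemma.
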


Define the function $p(y, x, w, \gamma) := y \mathbb{P}_F(H(x,w, \theta)) + (1-y) (1-\mathbb{P}_F(H(x,w, \theta)))$,
and for any subset $\Gamma \subset \Theta \times \mathcal{F}$, let $p(y, x, w, \Gamma) = 
\underset{\gamma \in \Gamma}{\sup~}  p(y, x, w, \gamma)$. Let $\rho$ be a distance on 
$\Theta \times \mathcal{F}$.  This $\rho$ can be the Euclidean distance on $\Theta$ together with any 
metric on $F$ that metrises weak convergence of $F$. 
For any $\epsilon>0$ let $\Gamma_{\epsilon}(\gamma^*) = \{\gamma: \rho(\gamma, \gamma^*) < \epsilon\}$ be an 
open ball of radius $\epsilon$ centered at $\gamma^*$. 

\begin{lemma}\label{lemma: integrability}
Under Assumption \ref{Assump1}-\ref{Assump3}, for any $\gamma \neq \gamma^*$, there exists 
an $\epsilon>0$ such that 
	\[
	\mathbb{E^*}\{[\log \{p(y,x, w, \Gamma_{\epsilon}(\gamma))/p(y, x, w, \gamma^*)\}]^+\} < \infty
	\]
	where $\mathbb{E}^*$ denotes expectations taken with respect to $p(y,x, w, \gamma^*)$. 
\end{lemma}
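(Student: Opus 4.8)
The plan is to bound the positive part of the log-ratio by a constant that does not depend on $(y,x,w)$, which immediately yields integrability. The key observation is that for the binary response model the quantity $p(y,x,w,\gamma)$ is always a probability — it is $\mathbb{P}_F(H(x,w,\theta))$ when $y=1$ and its complement when $y=0$ — so in particular $p(y,x,w,\gamma)\le 1$ for every admissible $\gamma$, and hence $p(y,x,w,\Gamma_\epsilon(\gamma))\le 1$ as well. Consequently
\[
\log\{p(y,x,w,\Gamma_\epsilon(\gamma))/p(y,x,w,\gamma^*)\} \le -\log p(y,x,w,\gamma^*),
\]
so it suffices to show that $[-\log p(y,x,w,\gamma^*)]^+ = [\log(1/p(y,x,w,\gamma^*))]^+$ is $\mathbb{E}^*$-integrable. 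Since $0\le p(y,x,w,\gamma^*)\le 1$ this positive part is just $-\log p(y,x,w,\gamma^*) = -\log p$, so the claim reduces to $\mathbb{E}^*[-\log p(y,x,w,\gamma^*)]<\infty$, i.e. the entropy-type integral of the true model is finite.

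Next I would compute this expectation explicitly. Writing $\pi(x,w) := \mathbb{P}_{F_0}(H(x,w,\theta_0)) = p(1,x,w,\gamma^*)$, conditioning on $(x,w)$ gives
\[
\mathbb{E}^*[-\log p(y,x,w,\gamma^*)] = \mathbb{E}_{(x,w)}\big[-\pi(x,w)\log\pi(x,w) - (1-\pi(x,w))\log(1-\pi(x,w))\big],
\]
and the inner expression is the binary entropy $h(\pi)$, which satisfies $0\le h(\pi)\le \log 2$ for all $\pi\in[0,1]$. Therefore the whole expectation is bounded above by $\log 2<\infty$, which completes the argument. In fact this shows the stronger statement that $[\log\{p(y,x,w,\Gamma_\epsilon(\gamma))/p(y,x,w,\gamma^*)\}]^+$ is uniformly bounded in $\epsilon$ and in $\gamma$, so no careful choice of $\epsilon$ is actually needed for this particular lemma — the hypothesis ``there exists an $\epsilon>0$'' is satisfied by any $\epsilon$.

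The only point requiring a little care is measurability: $p(y,x,w,\Gamma_\epsilon(\gamma))$ is defined as a supremum over an uncountable index set, so I should remark that $(x,w)\mapsto \sup_{\gamma'\in\Gamma_\epsilon(\gamma)} p(y,x,w,\gamma')$ is measurable. This follows because $\gamma'\mapsto \mathbb{P}_{F'}(H(x,w,\theta'))$ is, for fixed $(x,w)$ with the boundary hyperplane carrying no $F'$-mass, continuous on $\Theta\times\mathcal{F}$ by Lemma \ref{lemma: continuity}, so the supremum over the separable metric space $\Theta\times\mathcal{F}$ may be replaced by a supremum over a countable dense subset, hence is measurable; the set of $(x,w)$ where some limiting hyperplane carries positive mass has probability zero under Assumption \ref{Assump3}. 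I do not expect any genuine obstacle here — the boundedness of the log-ratio's positive part by $\log 2$ is the crux, and it is a direct consequence of the fact that likelihood contributions in a binary response model are probabilities rather than densities, so no unbounded-density pathologies arise.
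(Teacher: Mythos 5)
Your proof is correct and follows essentially the same route as the paper: both bound the ratio by $1/p(y,x,w,\gamma^*)$ using the fact that likelihood contributions are probabilities, and reduce the claim to finiteness of the expected binary entropy of the true model. Your explicit $\log 2$ bound and the measurability remark are small refinements the paper omits, but the argument is the same.
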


\begin{theorem}\label{thm: consistency}
    If $\{(y_i, x_i, w_i): i = 1, 2, \ldots , n \}$ is an iid sample from $p(y, x, w, \theta_0,  F_0)$, 
	under Assumptions \ref{Assump1}-\ref{Assump3}, then $(\hat \theta_n, \hat F_n)$ is strongly consistent. 
\end{theorem}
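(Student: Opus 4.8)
The plan is to run the classical Wald/Kiefer--Wolfowitz consistency argument, now in the semiparametric setting, using the two preparatory lemmas together with the identification result. Write $\gamma^* = (\theta_0, F_0)$, $\hat\gamma_n = (\hat\theta_n, \hat F_n)$, and let $L_n(\gamma) = \frac1n\sum_{i=1}^n \log p(y_i, x_i, w_i, \gamma)$ denote the normalized log-likelihood, so that $\hat\gamma_n$ maximizes $L_n$ over $\Theta\times\mathcal F$. The target is: for every open neighbourhood $U$ of $\gamma^*$, almost surely $\hat\gamma_n\in U$ eventually. Since $\rho$ combines the Euclidean distance on $\Theta$ with a metric for weak convergence on $\mathcal F$, this is precisely strong consistency of $(\hat\theta_n,\hat F_n)$.

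First I would record the topological facts. By Assumption \ref{Assump2} every $F\in\mathcal F$ is supported on a fixed compact set, so $\mathcal F$ is tight, hence weakly compact (Prohorov) and weakly metrizable; with $\Theta$ compact this makes $\Theta\times\mathcal F$ a compact, separable metric space. Separability legitimizes the measurable supremum $p(y,x,w,\Gamma_\epsilon(\gamma))$, and the maximizer $\hat\gamma_n$ exists by the convex-programming formulation of the NPMLE given earlier (the finite-dimensional program in the cell probabilities, followed by profiling over the compact $\Theta$). Next, for $\gamma\neq\gamma^*$ I would invoke the information inequality: the Kullback--Leibler functional $K(\gamma^*,\gamma) := \mathbb{E}^*\log\{p(y,x,w,\gamma^*)/p(y,x,w,\gamma)\}$ is the $G$-average of the binary relative entropy between $\mathbb{P}_{F_0}(H(x,w,\theta_0))$ and $\mathbb{P}_F(H(x,w,\theta))$, hence $K(\gamma^*,\gamma)\ge 0$ with equality only if these two conditional response probabilities agree $G$-almost surely; rerunning the proof of Theorem \ref{thm: identification}, but now tolerating a $G$-null exceptional set (legitimate because $V\mid (W,Z)$ and $Z\mid W$ are absolutely continuous with full support and characteristic functions are continuous), such equality forces $\gamma=\gamma^*$. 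Therefore $K(\gamma^*,\gamma)>0$ for every $\gamma\neq\gamma^*$. The relevant expectations are finite since $\mathbb{E}^*|\log p(y,x,w,\gamma^*)|\le\log 2$, exactly as in the proof of Lemma \ref{lemma: integrability}.

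Then comes the localization and covering step. Fix a closed set $B\subset\Theta\times\mathcal F$ with $\gamma^*\notin B$; it is compact. For each $\gamma\in B$, Lemma \ref{lemma: integrability} supplies $\epsilon_0(\gamma)>0$ with $\mathbb{E}^*[\log\{p(y,x,w,\Gamma_{\epsilon_0}(\gamma))/p(y,x,w,\gamma^*)\}]^+<\infty$, and --- crucially, because this bound does not depend on the radius --- as $\epsilon\downarrow 0$ the decreasing functions $p(y,x,w,\Gamma_\epsilon(\gamma))$ converge a.s. to $p(y,x,w,\gamma)$ (by upper semicontinuity of $\gamma\mapsto p(y,x,w,\gamma)$, cf. Lemma \ref{lemma: continuity}), so dominated convergence yields
\[
\mathbb{E}^*\log\{p(y,x,w,\Gamma_\epsilon(\gamma))/p(y,x,w,\gamma^*)\}\ \longrightarrow\ -K(\gamma^*,\gamma)<0 .
\]
Choose $\epsilon(\gamma)\le\epsilon_0(\gamma)$ making this expectation equal to some $-\delta_\gamma<0$. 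The balls $\Gamma_{\epsilon(\gamma)}(\gamma)$ cover $B$, so finitely many do, with centres $\gamma_1,\dots,\gamma_K$ and $\delta:=\min_k\delta_{\gamma_k}>0$. Applying the strong law of large numbers to each i.i.d. sequence $\log p(y_i,x_i,w_i,\Gamma_{\epsilon(\gamma_k)}(\gamma_k))$ and to $\log p(y_i,x_i,w_i,\gamma^*)$, almost surely for all $n$ large,
\[
\sup_{\gamma\in B}L_n(\gamma)\ \le\ \max_{1\le k\le K}\frac1n\sum_{i=1}^n\log p(y_i,x_i,w_i,\Gamma_{\epsilon(\gamma_k)}(\gamma_k))\ <\ L_n(\gamma^*)-\tfrac{\delta}{2}.
\]
Since $L_n(\hat\gamma_n)\ge L_n(\gamma^*)$, this forces $\hat\gamma_n\notin B$ for all $n$ large, almost surely; letting $B$ range over the complements of a countable neighbourhood base at $\gamma^*$ gives $\hat\gamma_n\to\gamma^*$ almost surely, which is the claim.

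I expect two points to carry the real weight. The first is the $G$-almost-everywhere strengthening of identification: Theorem \ref{thm: identification} is stated pointwise in $(w,z)$, whereas the information inequality only delivers agreement of the conditional response probabilities off a $G$-null set, so one must check that the absolute continuity and full-support conditions of Assumption \ref{Assump3}, combined with continuity of characteristic functions, still let the Cram\'er--Wold-type argument recover $\theta_0$ and $\phi_\beta$ --- this is where the substance lies. The second is the interchange of limit and expectation as $\epsilon\downarrow 0$ at a general $\gamma\in\mathcal F$, not merely at an absolutely continuous $F$ (which is all Lemma \ref{lemma: continuity} literally covers): here one leans on upper semicontinuity of $\gamma\mapsto p(y,x,w,\gamma)$ --- a Portmanteau statement about the closed half-spaces $H(x,w,\theta)$ --- and on the radius-free envelope from Lemma \ref{lemma: integrability}. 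Everything else is the standard Wald bookkeeping, and Assumption \ref{Assump2}'s bounded support earns its keep precisely by furnishing weak compactness of $\mathcal F$.
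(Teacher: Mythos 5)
Your proposal follows essentially the same route as the paper: the classical Wald/Kiefer--Wolfowitz argument built on Lemma \ref{lemma: continuity} and Lemma \ref{lemma: integrability}, Jensen's inequality combined with Theorem \ref{thm: identification} to obtain strict negativity of the expected log-likelihood ratio over small balls, a finite subcover of the complement of a neighbourhood of $\gamma^*$, and the strong law of large numbers. The minor differences --- you invoke monotone/dominated convergence under the radius-free envelope where the paper splits the log ratio into positive and negative parts and uses Fatou's lemma, and you flag more explicitly the $G$-a.e.\ strengthening of identification and the extension of the continuity lemma beyond $\mathcal{F}_0$ --- do not change the substance of the argument.
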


\section{Some Simulation Evidence}

In this section we report on some very limited  simulation experiments designed to compare performance
of our NPMLE method with the recent proposal by \citeasnoun{GK}.   The Gautier and Kitamura estimator may be
viewed as a deconvolution procedure defined on a hemisphere in $\RR^{d}$, and has the notable virtue
that it can be computed in closed form via elegant Fourier-Laplace inversion formulae.  A downside of
the approach is that it involves several tuning/truncation parameters that seem difficult to select.  In
contrast the optimization required by our NPMLE  is tuning parameter free, 
and the likelihood interpretation of the resulting convex optimization problem offers the
opportunity to formulate extended versions of the problem containing additional fixed parameters that
may be estimated by conventional profile likelihood methods.  

To facilitate the comparison with \citeasnoun{GK} we begin by considering the two simulation settings adapted
from their paper.  Rows of the design matrix, $X$ are generated as $(1, x_{i1}, x_{i2})$ with standard
Gaussian $x_{ij}$, and then normalized to have unit length.  The random coefficients, $\beta$
are generated  in the first setting from the two point distribution that puts equal mass on the points,
$(0.7,-0.7,1)$ and $(-0.7,0.7,1)$.  This is a highly stylized variant of the Gautier-Kitamura setting 
intended to be favorable to the NPMLE.  The sample size is 500.  Estimation imposes the condition that the third 
coordinate of $\beta$ is 1, and interest focuses on estimating the distribution of the first two coordinates.
After some experimentation with the authors' Matlab code we have implemented a version of the
\citeasnoun{GK} estimator in R.  We initially set the truncation and trimming parameters for
the estimator as suggested in \citeasnoun{GK}, and we plot contours of the resulting density
estimator in left panel of Figure \ref{fig.GK0}.  The discrete mass points of the data generating process
are depicted as red circles in this figure.  It may be noted that the estimated  contours
tend to concentrate the mass too much toward the origin.  In an attempt to correct this bias
effect, we experimented with increasing the truncation parameters to increase the flexibility
of the sieve expansion.  The right panel of Figure \ref{fig.GK0} illustrates the contours of the fit with $T = 3$
replaced by $T = 7$.  The two most prominent modes are now much closer to the discrete mass points
of the process generating the data, but there is some cost in terms of increased variability.
In Figure \ref{fig.KW0}  we illustrate estimated mass points as well as contours of the NPMLE of the random 
coefficient density after convolution with a bivariate Gaussian distribution with diagonal covariance
matrix with entries $(0.04, 0.04)$.  The NPMLE is extremely accurate in this, somewhat artificial
discrete setting.

\begin{figure}
	\includegraphics[scale=0.6]{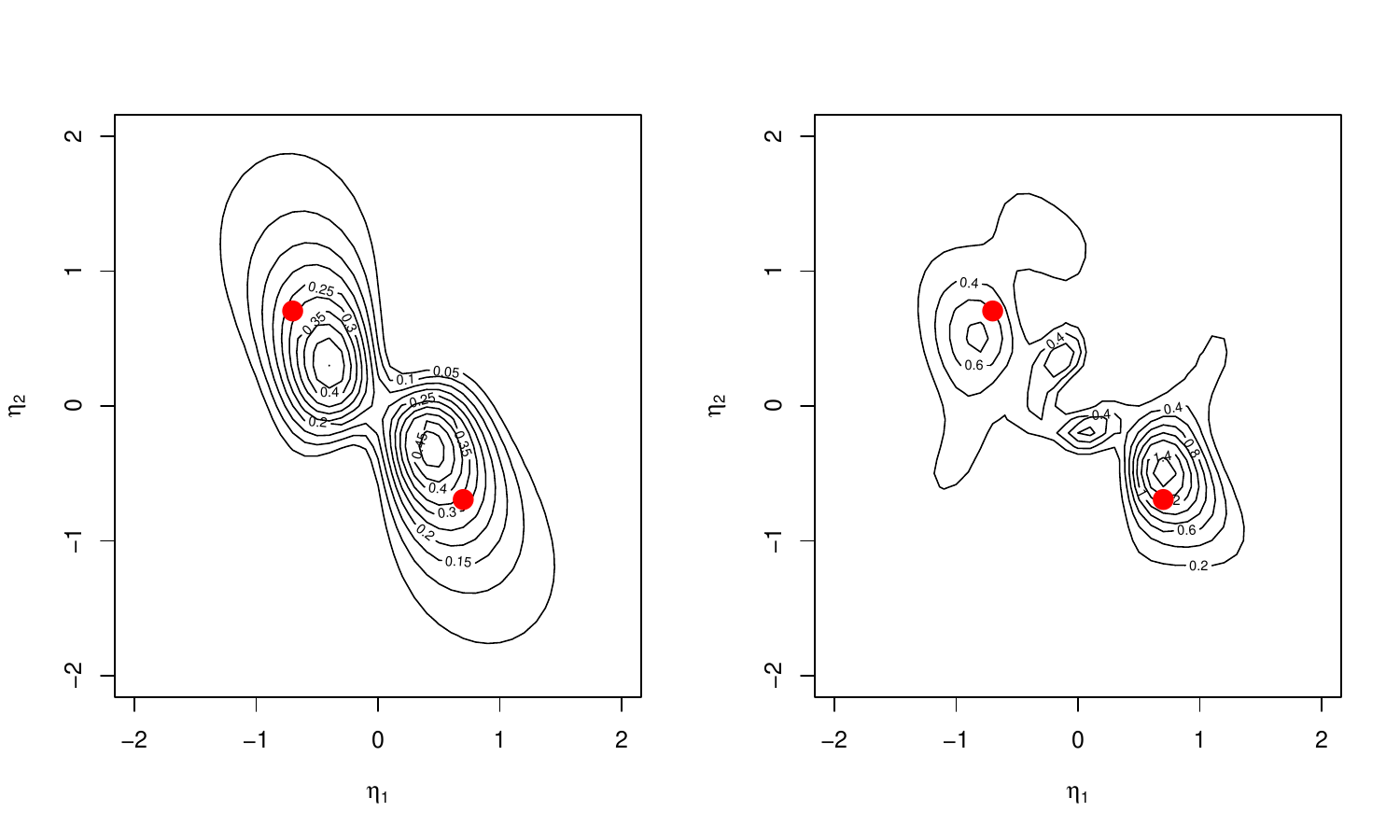}
	\caption{Contour plots of Gautier and Kitamura estimated density for the discrete simulation
	setting with $\eta$ generated with equal probability from the two points
	$(0.7,-0.7)$ and $(-0.7,0.7)$ indicated by the red circles.
	In the left panel the sieve dimension is set at the default value $T = 3$, while in the right
	panel it is increased to $T = 7$.}
	\label{fig.GK0}
\end{figure}
\begin{figure}
	\includegraphics[scale=0.6]{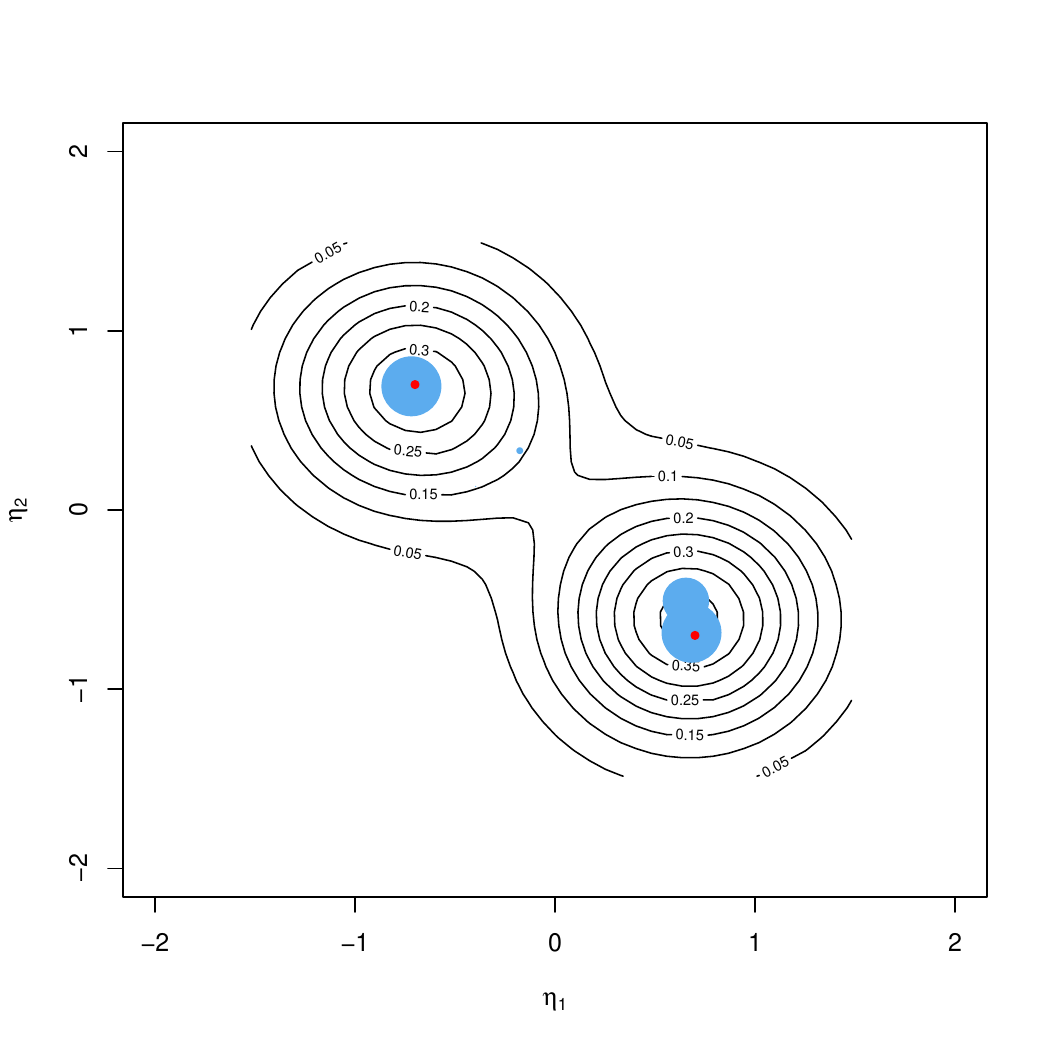}
	\caption{Mass points and smoothed contours of the NPMLE for the discrete Gautier-Kitamura
	simulation setting with mass concentrated at $(0.7,-0.7)$ and $(-0.7,0.7)$ as indicated by the red circles.
	The mass points of the NPMLE are indicated by the solid blue circles, and contours of a smoothed
	version of the NPMLE using a bivariate Gaussian kernel.}
	\label{fig.KW0}
\end{figure}
\begin{figure}
	\includegraphics[scale=0.8]{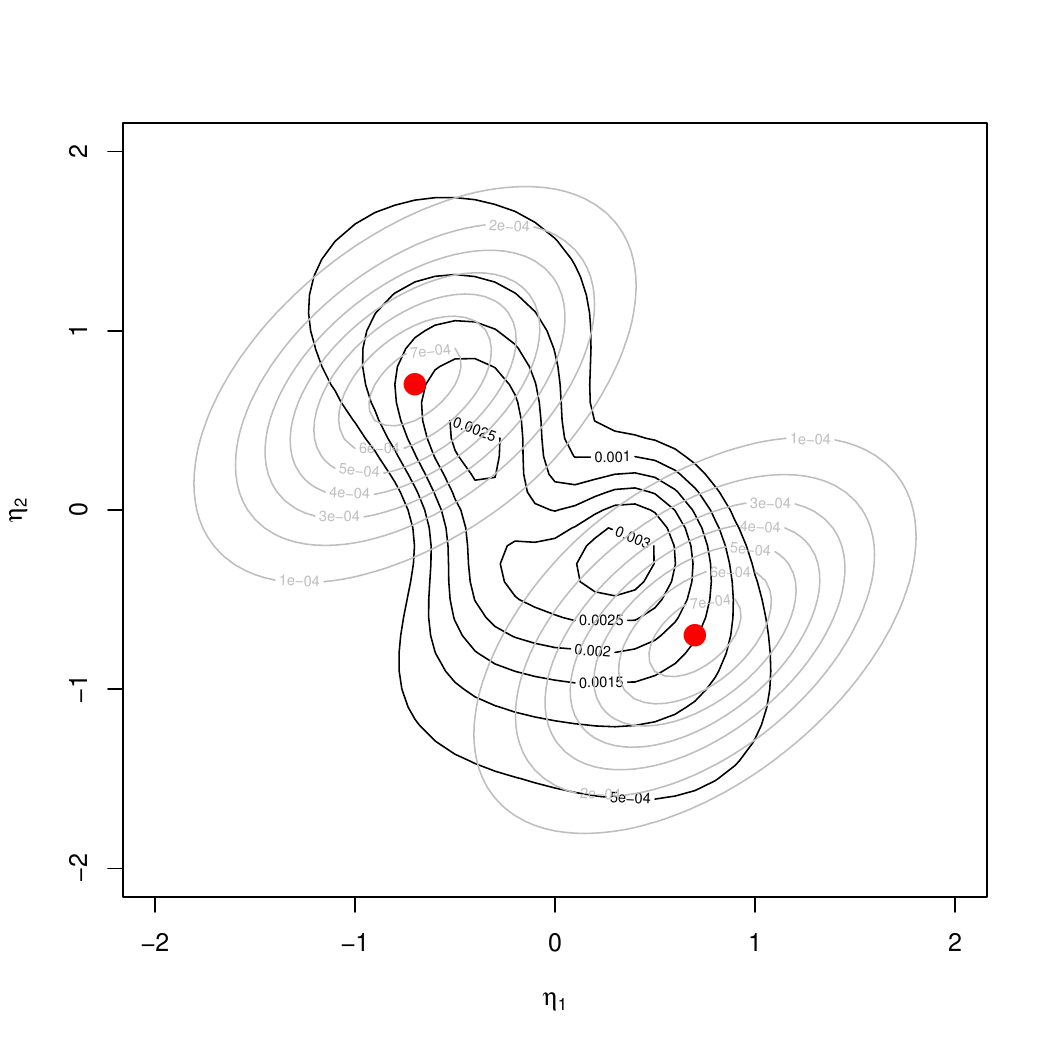}
	\caption{Gautier-Kitamura contours for a sample from the smooth bimodal bivariate distribution:  
	The true distribution of the random coefficients is a Gaussian location mixture 
	with two components each with variance 0.3, covariance 0.15 and means
	$(0.7,-0.7)$ and $(-0.7,0.7)$.  Contours of the true density are indicated 
	in grey with respective means by the solid red circles.} 
	\label{fig.GK1}
\end{figure}
\begin{figure}
	\includegraphics[scale=0.8]{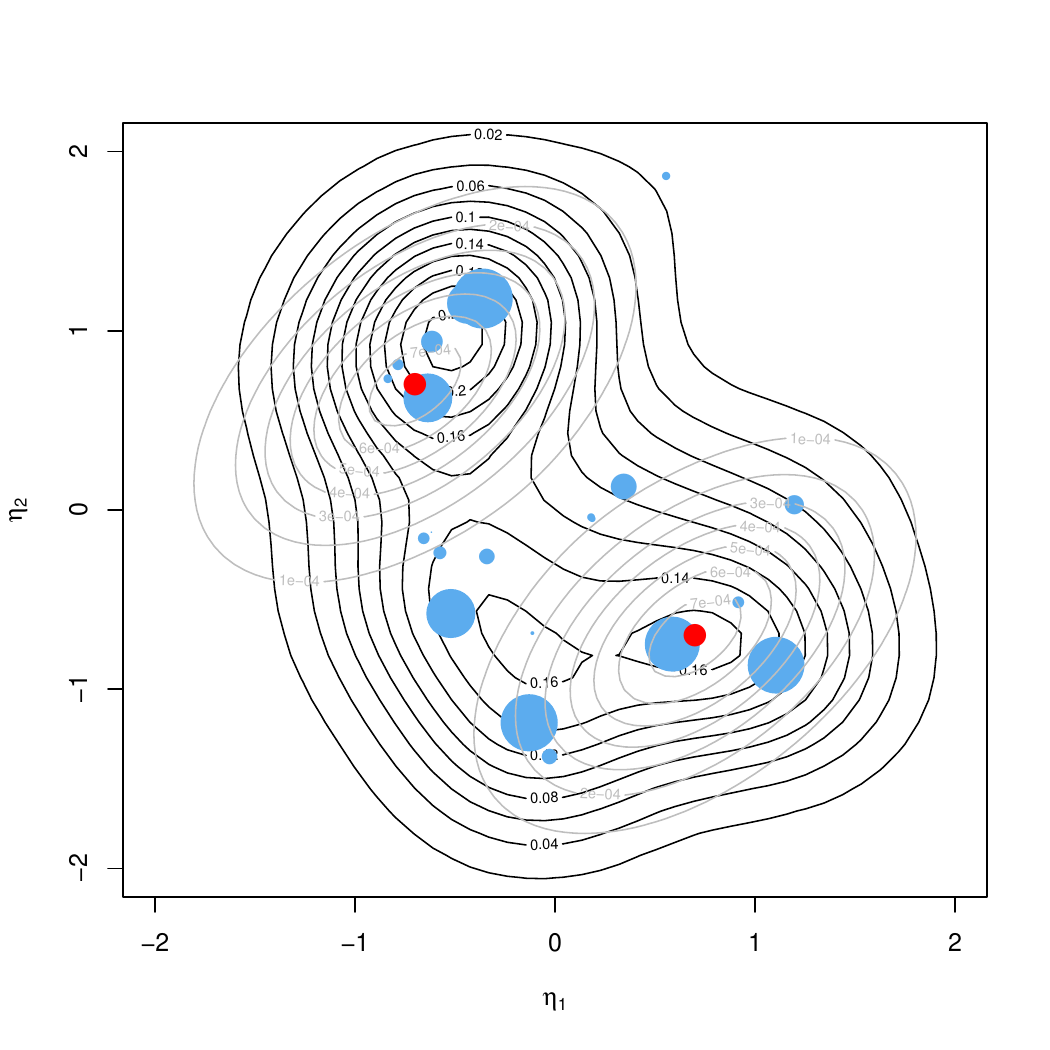}
	\caption{The NPMLE for a sample from a smooth bivariate distribution:  The
	true distribution of the random coefficients is a Gaussian location mixture 
	with two components each with variance 0.3, covariance 0.15 and means
	$(0.7,-0.7)$ and $(-0.7,0.7)$.  Contours of the true density are indicated 
	in grey with respective means by the solid red circles.   The mass points
	of the unsmoothed NPMLE are indicated by the solid blue circles whose
	areas depict associated mass.  Contours of the smoothed NPMLE are shown in black.}
	\label{fig.KW1}
\end{figure}

A somewhat more challenging setting for the NPMLE, taken directly from \citeasnoun{GK}, involves random
coefficients that are generated from a mixture of two bivariate Gaussians with the same centers
as in the previous case, but now both with variances 0.3, and covariance 0.15 for each of the equally
weighted components.
In Figure \ref{fig.GK1} we depict the density contours of the Gautier-Kitamura estimator using their
recommended tuning parameters, $T = 3$ and $TX = 10$,
with the contours of the true density of the random coefficients shown in grey.  Again, it is
apparent that the truncated basis expansion tends to shrink the mass of the estimated distribution
toward the origin.
In Figure \ref{fig.KW1}  we illustrate estimated mass points as well as contours of the smoothed NPMLE 
density again after convolution with a bivariate Gaussian distribution with diagonal covariance
matrix with entries $(0.04, 0.04)$.  The contours of the true density of the random coefficients 
are depicted by the grey contours with centers indicated by 
the two red circles.  The unsmoothed NPMLE has discrete mass points indicated by
the blue circular regions in the figure.  The smoothing introduces a tuning parameter into
the NPMLE fit, but it should be stressed that prior to the convolution step to impose the
smoothing there is no tuning parameter selection required.  
Clearly, there is more dispersion as we might expect in the NPMLE discrete solution, but the smoothed estimate
quite accurately captures the two modes of the random coefficient density.
Careful examination of the contour labeling of Figures \ref{fig.GK1} and \ref{fig.KW1} reveals
that the Gautier-Kitamura contours are concentrated along the axis connecting the two Gaussian centers
and assign almost no probability to the regions with $\eta_1 < -1$ or $\eta_1 > 1$, in contrast the NPMLE
contours cover the effective support of the true distribution somewhat better.

One replication of an experiment doesn't offer sufficient evidence of performance, 
so we have carried out two small simulation experiments to compare
performance of the various estimators under consideration using both of the foregoing simulation 
settings.  In the first experiment data is generated in accordance with the
two point discrete distribution with sample size $n = 500$.  
Four estimators are
considered:  two variants of the NPMLE, one smoothed the other not, referred to in the tables as
NPMLE and NPMLEs respectively, the Gautier-Kitamura (GK) estimator
with default tuning parameter selection, the classical logistic regression estimator and the Klein-Spady (KS)
single index estimator as implemented in the R package of \citeasnoun{np}.  For each
estimator we compute predicted probabilities for a fresh sample  of 500 $x$ observations also drawn from
the same Gaussian distribution generating the data used for estimation, and these probabilities are
compared with the true probabilties of $P(Y_i = 1 | X = x_i)$ for this sample.  Table \ref{tab.simKS0a} reports
mean absolute and root mean squared errors for the predicted probabilities over the 100 replications of the
experiment.  The discrete NPMLE is the clear winner, with its smoothed version performing only slightly
better than the Gautier-Kitamura deconvolution procedure. 

Skeptical minds may, correctly, regard the two point distribution simulation setting as ``too favorable''
to the NPMLE since the NPMLE is known to deliver a relatively sparse discrete estimate.
Thus, it is of interest to see how our comparison would look when the true random coefficient
distribution is itself smooth.  

\begin{table}[!htbp]
\begin{center}
\begin{tabular}{lrrrrr}
\hline\hline
\multicolumn{1}{l}{}&\multicolumn{1}{c}{GK}&\multicolumn{1}{c}{NPMLE}&\multicolumn{1}{c}{NPMLEs}&\multicolumn{1}{c}{Logit}&\multicolumn{1}{c}{KS}\tabularnewline
\hline
MAE&$0.1211$&$0.0347$&$0.1064$&$0.1684$&$0.2698$\tabularnewline
RMSE&$0.1532$&$0.0796$&$0.1428$&$0.2042$&$0.3450$\tabularnewline
\hline
\end{tabular}
\caption{Bivariate Point Mass Simulation Setting:  
Mean Absolute and Root Mean Squared Errors of Predicted Probabilities\label{tab.simKS0a}}\end{center}
\end{table}

\begin{table}[!htbp]
\begin{center}
\begin{tabular}{lrrrrr}
\hline\hline
\multicolumn{1}{l}{}&\multicolumn{1}{c}{GK}&\multicolumn{1}{c}{NPMLE}&\multicolumn{1}{c}{NPMLEs}&\multicolumn{1}{c}{Logit}&\multicolumn{1}{c}{KS}\tabularnewline
\hline
MAE&$0.1288$&$0.0592$&$0.0475$&$0.0709$&$0.2108$\tabularnewline
RMSE&$0.1440$&$0.0748$&$0.0594$&$0.0896$&$0.2626$\tabularnewline
\hline
\end{tabular}
\caption{Bivariate Gaussian Simulation Setting:  
Mean Absolute and Root Mean Squared Errors of Predicted Probabilities\label{tab.simKS1a}}\end{center}
\end{table}

Table \ref{tab.simKS1a} reports mean absolute and root mean squared errors for the predicted probabilities 
for the 100 replications of the second simulation setting with the location mixture of Gaussians.
Again, the NPMLE is the clear winner, but now its smoothed version performs 
somewhat better than the unsmoothed version although both do better than the Gautier-Kitamura 
deconvolution procedure.

\section{An Application to Modal Choice}

In this section we revisit a modal choice model of \citeasnoun{Horowitz93}, motivated by similar
considerations as the classical work of \citeasnoun{McFadden} on the Bay Area Rapid Transit system.
The data consists of 842 randomly sampled observations of individuals' transportation choices for 
their daily journey to work in the Washington DC metro area. 
Following Horowitz, we focus on the binary choice of commuting to work by automobile 
versus public transit. 
In addition to the individual mode choice variable, $y_i$, taking the value 1 if an automobile is used 
for the journey to work and 0 if public transit is taken, we observe the number of cars owned by the 
traveller's household (\textit{AUTOS}), 
the difference in out-of-vehicle (\textit{DOVTT}) and in-vehicle travel time (\textit{DIVTT}).
Differences are expressed as public transit time minus automobile time in minutes per trip. 
The corresponding differences in transportation cost, \textit{DCOST}, public transit fare minus automobile 
travel cost is measured in cents per trip. 
We have omitted the variable $\textit{DIVTT}$ from our analysis since it had no significant impact
on modal choice in prior work, see Table 2 of \citeasnoun{Horowitz93} for estimation results using 
various parametric and semiparametric models. Although our methodology can accommodate  additional 
$x_i$ variables with random coefficients it becomes considerably more difficult to visualize 
distributions of random coefficients $F_\eta$ in higher dimensions.  Other control variables
in the vector $w_i$ could also be accommodated, but the application doesn't offer obvious candidates.

We consider the following random coefficient binary choice model: 
\[
\P(y_i=1 \mid x_i, v_i, \textit{AUTOS}_i=k ) = \int 1\{ x_i ^\top \eta_i - v_i \geq 0\}dF_{\eta, k}
\]
where $x_i = (1, \textit{DOVTT}_i)$ and $v_i = \textit{DCOST}_i/100$ and $\eta_i = \{\eta_{1i}, \eta_{2i}\}$. 
We have normalized the coefficient of $v_i$ to be 1, since $v_i$ represents a negative price, 
transit fare minus automobile cost. Under this normalization, the coefficient $\eta_{2i}$ 
has a direct interpretation as the commuter's value of travel time in dollars/minute. 
The coefficient $\eta_1$ obviously has the same units as $v_i$, and can be interpreted as
a threshold -- setting a critical value for $v_i$ above which the subject decides to commute
by automobile, and below which he chooses to take public transit, assuming that the time differential
is negligble.  Auto ownership is a discrete variable, taking values between 0 and 7. 
Households with 3 cars or more commute exclusively by automobile, so we only consider subjects with 
fewer than 3 cars, a subsample containing about 90\% of the data. 
Since car ownership is plausibly an endogenous decision and may act as a proxy for wealth of the 
household and potential constraint on the travellers' mode choices, 
we estimate distinct distributions of the random coefficients for subjects with zero, one and two cars. 
Figure \ref{fig:data} provides scatter plots of \textit{DOVTT} and \textit{DCOST} for $k \in \{0,1,2\}$,
distinguishing auto and transit commuters by open and filled circles. 

\begin{figure}[h]
	\includegraphics[width=\textwidth]{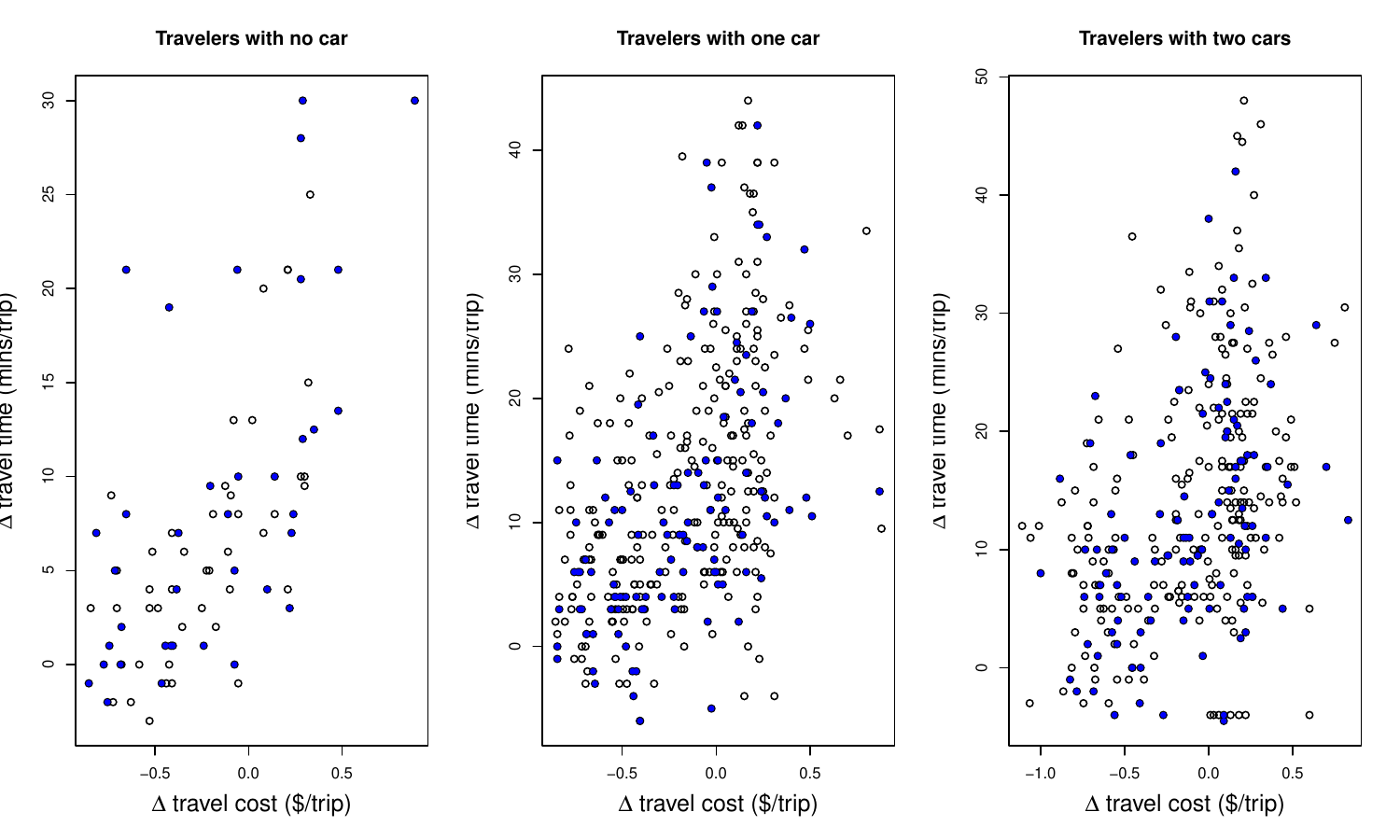}
	\caption{Scatter plot of DOVTT and DCOST for commuters with different number of cars at home: 
	the open circles correspond to subjects that commute by auto while the solid (blue) points 
	represent those who use public transit.}
	\label{fig:data}
\end{figure}

We briefly report results for the subsamples, $k \in \{0,1,2\}$, separately, focusing initially on the
shape and dispersion of the estimated $F_\eta$ distributions.  F or each subsample we contrast the 
discrete distribution delivered by the NPMLE with the contours of the smooth density produced by the
Gautier and Kitamura estimator.  

\subsection{Commuters without an automobile}

There are 79 observations for commuters without a car. Despite having no car 16 still manage 
to commute to work by automobile. The lines determined by these realizations of $(x_i, v_i)$ 
lead to a partition of $\mathbb{R}^2$ into 2992 polygons, of these only 112 are locally maximal 
and therefore act as potential candidates for positive mass assigned by the NPMLE of $F_{\eta}$. 
In Figure \ref{fig:car0} we compare the estimates of $F_\eta$ produced by the NPMLE and the
deconvolution estimator of Gautier and Kitamura.  The solid (red) points in the figure represent
the locations of the mass points identified by the NPMLE;  the mass associated  with each of these
points is reported in Table \ref{tab:car012}.  Only 11 of the 112 candidate polygons achieve mass greater
that 0.001, determined by the NPMLE.  Although we plot points with area proportional to estimated mass,
we should again recall that these mass points are really associated with underlying polygons.

In contrast,
the Gautier-Kitamura density contours are entirely concentrated near the origin.  We have experimented
quite extensively with the choice of tuning parameters for the Gautier-Kitamura estimator, eventually
adopting a likelihood criterion for the choice of the sieve dimensions, $T$ and $TX$, that are required.  This
criterion selects rather parsimonious models in this application, choosing $T =2$ and $TX = 3$ for
this subsample.  See Table \ref{tab:GKTuning} of the Appendix for further details on this selection.
Selection of lower dimensional Fourier-Laplace expansions obviously yield more restrictive parametric
specifications, however this greater degree of regularization seems to be justified by the commensurate
reduction in variability of the estimator.  Although the comparison is inherently somewhat unfair we note 
that the NPMLE achieves a log-likelihood of -28.16, while the Gautier-Kitamura estimate achieves -37.58.

The two points on the far left of Figure \ref{fig:car0} constitute about 0.05 mass each and represent 
individuals who seem to be committed transit takers.  A coefficient of, say $\eta_1 = -8$ would mean
that the transit fare per trip would have to be 8 dollars per trip higher than the corresponding
car fare to induce them to travel by car.  The fact that the $\eta_2$ coordinates associated with these
extreme points is about one, means that, since the variable $DOVTT$ measures the transit time differential 
in its original scale of minutes, for such individuals a 10 minute time differential would be sufficient
to induce them to commute by automobile.

\begin{figure}[h]
	\includegraphics[scale=0.5]{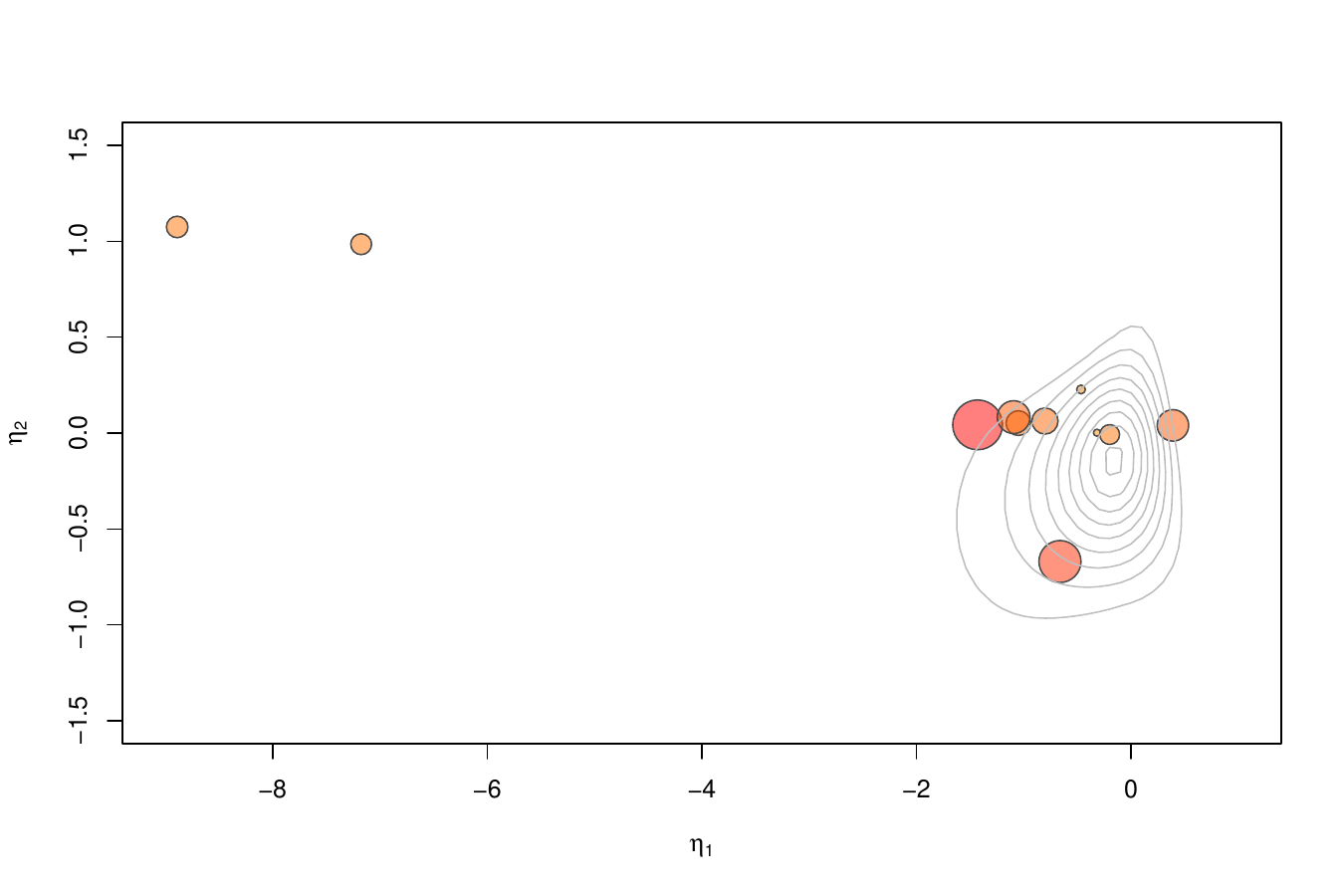}
	\caption{Two Estimates of the Random Coefficient Distribution $F_\eta$
	Based on the Subsample of Commuters with No Automobile:
	Shaded circles represent the interior points of polygons with positive mass
	as estimated by the NPMLE.  The area inside the circles is proportional to estimated mass.
	Table \ref{tab:car012} in the Appendix reports the NPMLE results in greater detail.
	Grey contour lines depict the density contours of the Gautier-Kitamura estimator
	using Fourier-Laplace tuning parameters $T = 2$ and $TX = 3$ selected by the
	log-likelihood criterion.}
	\label{fig:car0}
\end{figure}

\subsection{Commuters with One Automobile}

There are 355 commuters who have one automobile of which 302 commute by car.
The hyperplane arrangement  determined by this subsample of pairs $(x_i, v_i)$ yields a 
tessellation of $\mathbb{R}^2$ into 55549 distinct polygons 
of which there are 1272 with locally maximal counts. 
Figure \ref{fig:car1} displays the estimated mass points of the NPMLE and the contour plot 
the Gautier-Kitamura density estimate as in the preceeding figure.  
As for the subsample without an automobile, the NPMLE mass is considerably more dispersed than
the Gautier-Kitamura contours.  This may be partly attributed to the rather restrictive choice
of the tuning parameters, $T = 2$ and $TX = 3$, dictated by the likelihood criterion.
Again, a more detailed tabulation of how the NPMLE mass is allocated is available in Table
\ref{tab:car012}.  It may suffice here to note that while most of the NPMLE mass is again
centered near the origin, there is about 0.10 mass at $(\eta_1, \eta_2) \approx (9.7, -0.24)$
and another, roughly, 0.05 probability with $\eta_1 < -12$.  The Gautier-Kitamura contours are again
much more concentrated around the origin.  These differences are reflected in substantial differences
in predicted outcomes and estimated marginal effects. 

\begin{figure}[h]
	\includegraphics[scale=0.5]{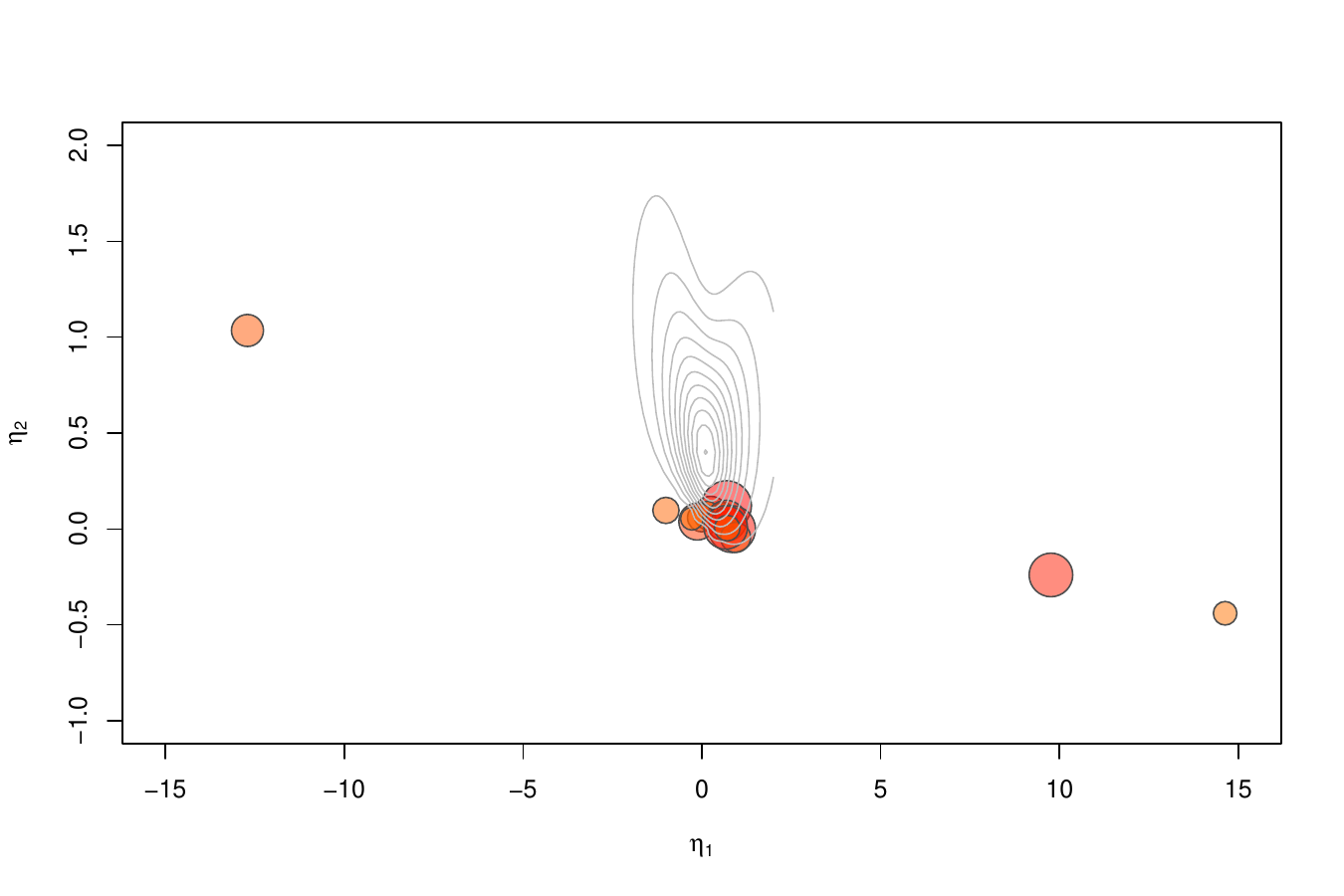}
	\caption{Two Estimates of the Random Coefficient Distribution $F_\eta$
	Based on the Subsample of Commuters with One Automobile:
	Shaded circles represent the interior points of polygons with positive mass
	as estimated by the NPMLE.  The area inside the circles is proportional to estimated mass.
	Table \ref{tab:car012} in the Appendix reports the NPMLE results in greater detail.
	Grey contour lines depict the density contours of the Gautier-Kitamura estimator
	using Fourier-Laplace tuning parameters $T = 3$ and $TX = 3$ selected by the
	log-likelihood criterion.}
	\label{fig:car1}
\end{figure}

\subsection{Commuters with Two Automobiles}

There are 316 travellers with 2 cars of which 303 commute to work by automobile. 
Of the 44662 polygons for this subsample there are only 288 with locally maximal counts. 
Figure \ref{fig:car2} depicts the  mass points of the NPMLE and the contours of
the Gautier-Kitamura estimate for this subsample.  The disperson of the threshhold
parameter $\eta_1$ is considerably smaller than for the zero and one car subsamples, but it
is still the case that the NPMLE is more dispersed that the Gautier-Kitamura estimate in this dimension.
Curiously, the Gautier-Kitamura estimate places most of its mass well above any of the NPMLE mass points.
This may again be a consequence of the low dimensionality of the Fourier-Laplace expansion, which is 
selected as $T = 2$ and $TX = 3$ by the likelihood criterion.

\begin{figure}
	\centering
	\includegraphics[scale=0.5]{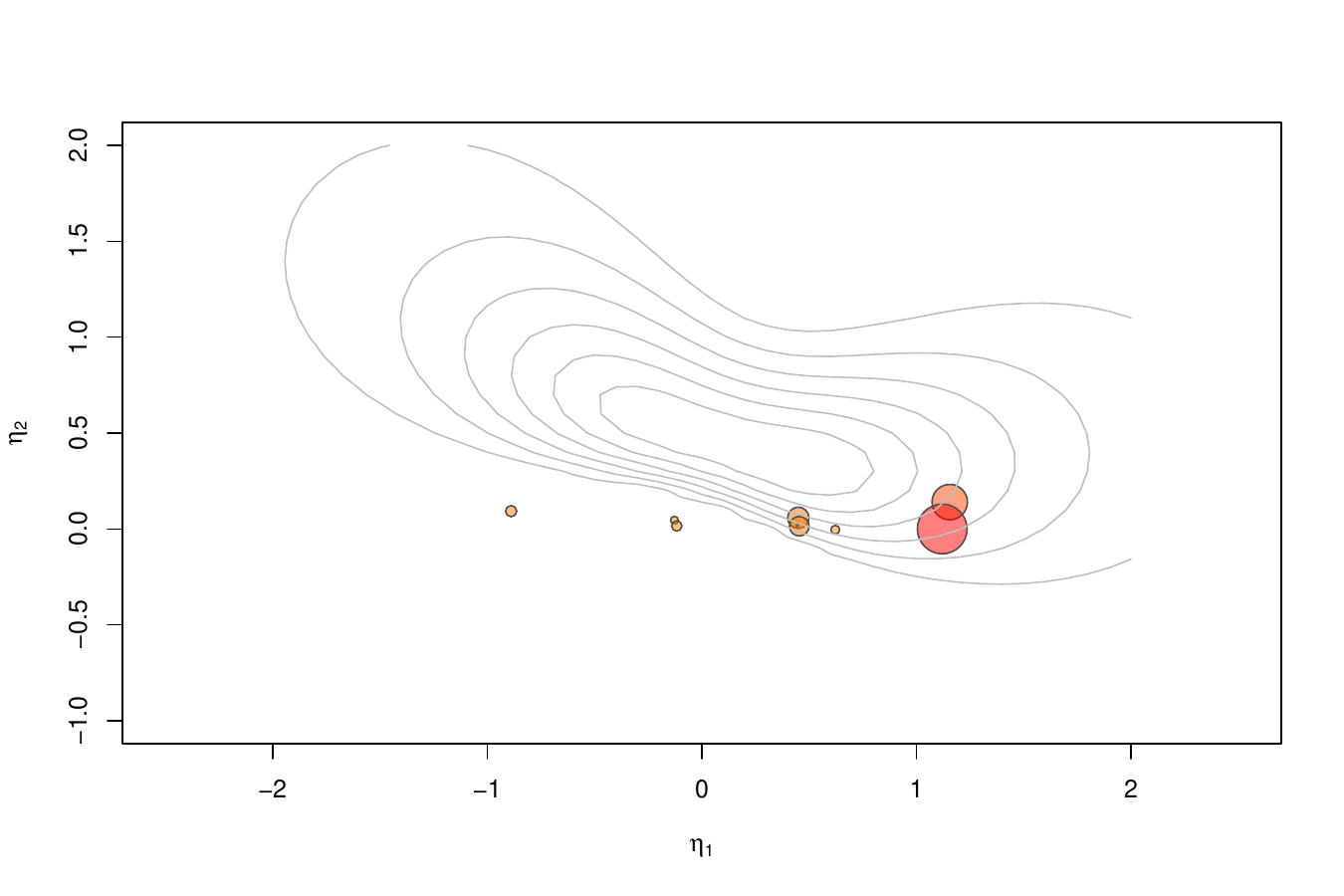}
	\caption{Two Estimates of the Random Coefficient Distribution $F_\eta$
	Based on the Subsample of Commuters with Two Automobiles:
	Shaded circles represent the interior points of polygons with positive mass
	as estimated by the NPMLE.  The area inside the circles is proportional to estimated mass.
	Table \ref{tab:car012} in the Appendix reports the NPMLE results in greater detail.
	Grey contour lines depict the density contours of the Gautier-Kitamura estimator
	using Fourier-Laplace tuning parameters $T = 2$ and $TX = 3$ selected by the
	log-likelihood criterion.}
	\label{fig:car2}
\end{figure}

\subsection{Marginal Effects}

With discrete choice model, a common parameter of interest is the marginal effect of some control variables. 
We consider two scenarios for evaluating marginal effects based on estimates of the quantities,
\begin{align*}
	\Delta_z(z_0, v_0,\Delta z) = \PP (y = 1\mid v = v_0, z=z_0) - \PP(y=1 \mid v = v_0, z = z_0 - \Delta z)\\
	\Delta_v(z_0,v_0, \Delta v) = \PP(y = 1\mid v = v_0, z = z_0) - \PP(y=1\mid v = v_0 - \Delta v, z = z_0).
\end{align*}
The value $\Delta_z(z_0,v_0, \Delta z)$ measures the marginal effect of reducing out-of-vehicle travel time 
by $\Delta z$ minutes/trip; the value $\Delta_v(z_0, v_0, \Delta v)$ measures the marginal effect of reducing the 
transit fare by $\Delta v$ dollars holding transportation time constant. In each case, we fix the initial values 
$(z_0,v_0)$ at the 75-th quantiles for the subsample of individuals who drive to work.  
Figures \ref{fig:MEv} and \ref{fig:MEt} depict the marginal effect of fare reduction and commute time reduction, 
respectively, conditional on automobile ownership. 

\begin{figure}
	\centering
	\includegraphics[scale=0.6]{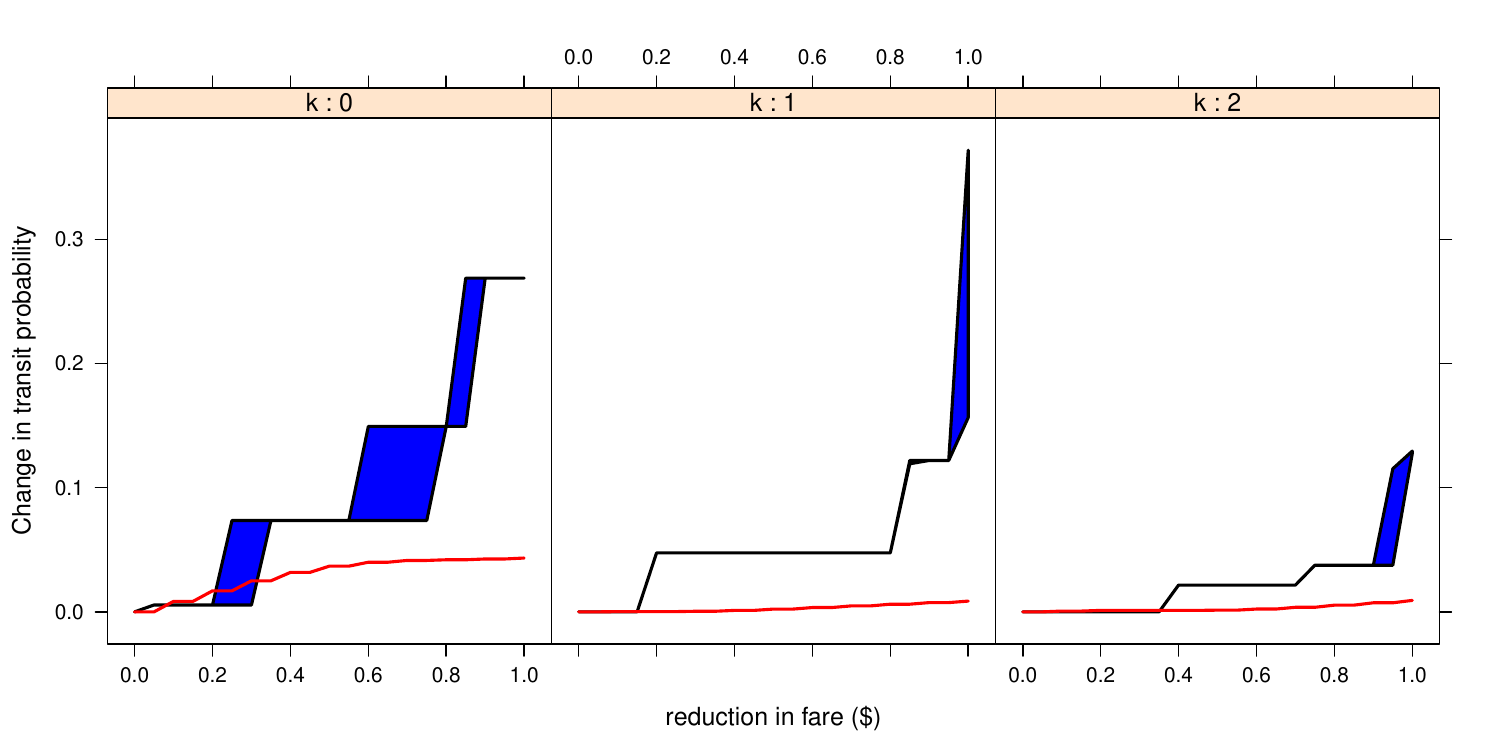}
	\caption{Set Valued Estimates of Marginal Effect for Transit Fare Reduction: 
	The shaded solid regions represent the NPMLE set-valued estimates of the marginal
	effect of reducing the transit fare on the probability of choosing the transit
	option, the (red) line represents the corresponding estimates from the Gautier-Kitamura fit.}
	\label{fig:MEv}
\end{figure}

\begin{figure}
	\centering
	\includegraphics[scale=0.6]{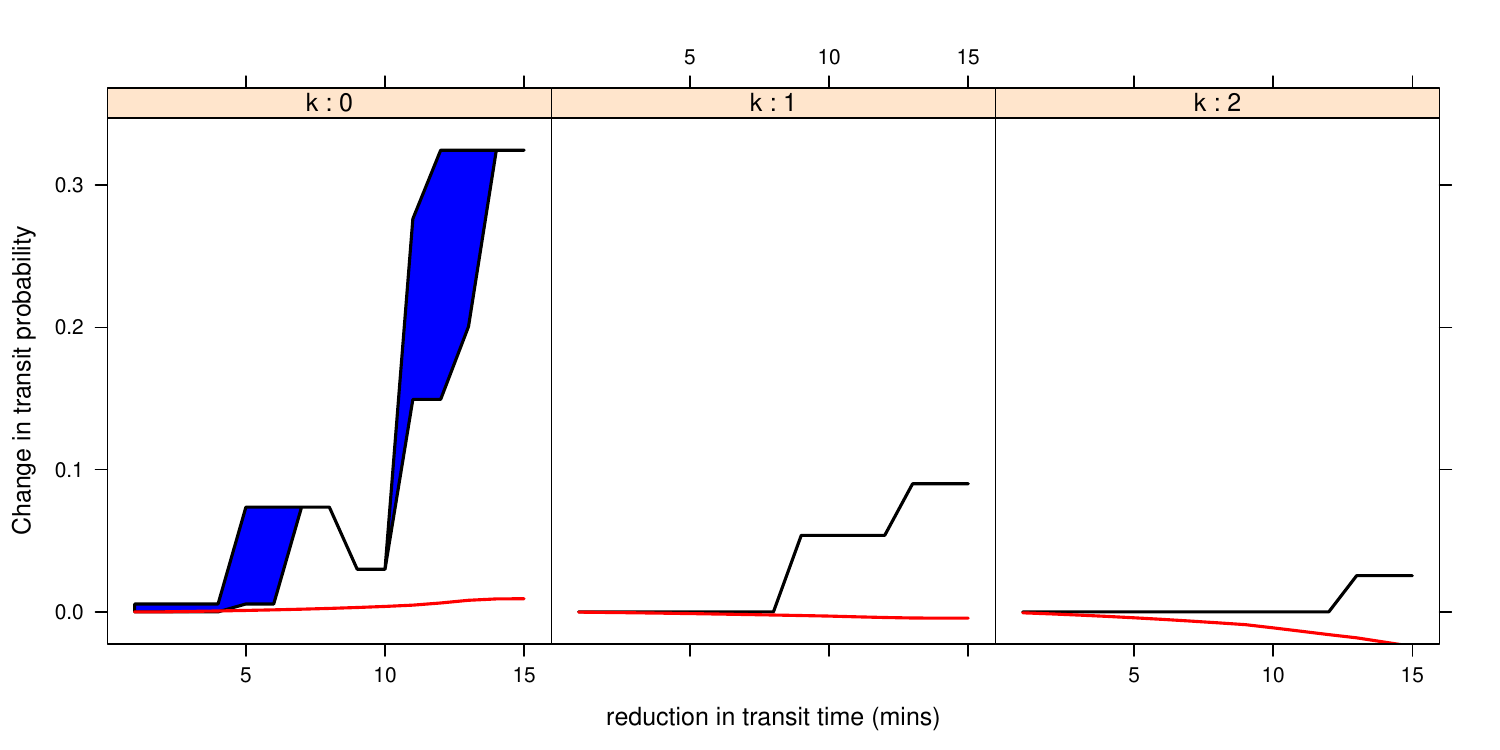}
	\caption{Set Valued Estimates of Marginal Effect for Transit Time Reduction: 
	The shaded solid regions represent the NPMLE set-valued estimates of the marginal
	effect of reducing the transit time in minutes on the probability of choosing the transit
	option, the (red) line represents the corresponding estimates from the Gautier-Kitamura fit.}
	\label{fig:MEt}
\end{figure}

As discussed in Section \ref{sec.asymptopia}, for any fixed $n$, the NPMLE $\hat F_n$ assigns
probability mass $\{\hat p_j\}_{j = 1:M}$ to polytopes $\{C_j\}_{j=1:M}$ that define
the partition of the parameter space determined by the hyperplane arrangement, not to specific points. 
This feature of the NPMLE naturally leads to a set valued estimator for marginal effects for any finite sample. 
Suppose we would like to estimate $\Delta_z(z_0,v_0, \Delta z)$.  Denoting the half-space determined by any point 
$(1, z, v)$ by $H^+(v,z) := \{(\eta_1, \eta_2): \eta_1 + \eta_2 z - v \geq  0 \}$,
it is easy to see that the set estimator for $\mathbb{P}(y=1\mid v=v_0, z = z_0)$ can be expressed as,
\[
\mathbb{P}_{\hat F_n}(H^+(v_0,z_0))  \in  [\hat L_n(v_0,z_0), \hat U_n(v_0,z_0)]
\]
with \[
\hat L_n = \sum_{j=1}^M 1\{C_j \subseteq H^+(v_0,z_0)) \hat p_j
\]
and 
\[
\hat U_n =  \sum_{j=1}^M 1\{C_j \subseteq H^+(v_0,z_0)) \hat p_j +
\sum_{j=1}^M 1\{C_j \nsubseteq H^+(v_0,z_0), C_j \cap H^+(v_0,z_0)\neq \emptyset\}\hat p_j
\]
These bounds are constructed by finding the corresponding polygons crossed by the new hyperplane $H(v_0, z_0)$. The lower bound sums all non-zero probability masses $\hat p_j$ allocated to the polygons that are completely contained in the half-space $H^{+}(v_0,z_0)$ while the upper bound adds in the additional non-zero probably masses that are allocated to polygons that are crossed by $H(v_0, z_0)$.
	
	The set valued estimator for the marginal effect is therefore, 
\[
\hat \Delta_z(z_0,v_0, \Delta z) \in [\hat  L_n(z_0,v_0) - 
\hat U_n(z_0- \Delta z,v_0), \hat U_n(z_0,v_0) - \hat L_n(z_0 - \Delta z, v_0)]
\]
Figures \ref{fig:MEv} and \ref{fig:MEt}  report these bounds for different values of $\Delta z$ and $\Delta v$. 

The corresponding marginal effects for the Gautier-Kitamura estimates are depicted as the dotted red
curves in this figure.  The concentration of the Gautier-Kitamura 
$\hat F_\eta$ near the origin implies marginal effects that are considerably smaller than those implied
by the NPMLE results.  For both the fare and transit time effects
the Gautier-Kitamura estimates.  Car ownership is clearly an important influence especially on
the marginal effects of time savings for commuters without a car; while there is essentially no
marginal effect for commuters with two cars.

\subsection{Single Index Model}
As a final comparison, we reconsider the single index model described in Section 2 
where the parameter $\eta_2$ is treated as fixed and there is only a random intercept effect,
\[
\mathbb{P}(y_i = 1\mid x_i, v_i, AUTOS_i = k) = \int 1\{\eta_{1i} + z_i \eta_2 - v_i \geq 0 \} dF_{\eta_1,k}.
\]
We consider several semiparametric estimators that make no distributional assumption on $F_{\eta_1,k}$ 
as well as the parametric probit estimator that presumes that $F_{\eta_1,k}$ is standard Gaussian. 
Since we can only identify $F_{\eta_{1,k}}$ up to scale, we again normalize the coefficient for $v_i$ to be 1. 
We consider the kernel-smoothing based estimator proposed by \citeasnoun{KleinSpady} 
and the score estimator proposed in \citeasnoun{GH} based on the nonparametric maximum likelihood 
estimator of $F_{\eta_1,k}$ as in \citeasnoun{Cosslett83}.  The former estimator requires a choice of
a bandwidth for the kernel estimate, whereas the latter is free of tuning parameters.
The Klein-Spady estimates are based on the implementation in the R package  {\tt{np}} of 
\citeasnoun{np} using a Gaussian kernel.  The bandwidth is chosen automatically via a
likelihood-based cross-validation criterion. The estimation results 
are reported in Table \ref{tab:singleindex} together with the probit model estimates. In comparison, 
we also include in the last column the log-likelihood of the random coefficient model where $\eta_2$ 
is allowed to be heterogeneous across individuals. 

A virtue of the single index representation is that it is possible to estimate standard errors for
the fixed parameter  estimate $\hat \eta_2$, which appear in the table in parentheses under their
coefficients.  However, as is clear from the foregoing figures and
the log likelihood values of Table \ref{tab:singleindex} these standard errors require a willing
suspension of disbelief in view of the apparent heterogeneity of the NPMLE estimates
of the bivariate model.

\begin{table}[ht]
\begin{tabular}{r|rr|rr|rr|c}
\hline
Cars& \multicolumn{2}{c}{Groeneboom-Hendrickx} & \multicolumn{2}{c}{Klein-Spady}& \multicolumn{2}{c}{Probit}& \multicolumn{1}{c}{NPMLE$(\eta_1, \eta_2)$}\\
\hline
& $\hat \eta_2$ & logL & $\hat \eta_2$ & logL & $\hat \eta_2$& logL & logL\\ 
\hline
0&$\underset{(0.022)}{0.026}$& -32.87 &$\underset{(0.026)}{-0.396}$& -37.60 &$\underset{(0.021)}{ 0.034}$ & -37.420 & -29.55\\ 
1&$\underset{(0.006)}{0.018}$& -121.71&$\underset{(0.007)}{0.034}$& -131.71 &$\underset{(0.010)}{ 0.028}$ & -130.84 & -112.32\\ 
2&$\underset{(0.009)}{0.030}$& -47.33&$\underset{(0.003)}{0.003}$& -51.85 &$\underset{(0.019)}{ 0.048}$ & -51.80 & -46.13\\ 
\hline
\end{tabular}
\vspace{5mm}
\caption{Estimates for $\eta_2$ of the single index model for households having different numbers of vehicles. 
The semiparametric and parametric probit estimates normalize the coefficient of $v$  to be 1. 
The last column reports the log-likelihood of the NPMLE  for the bivariate model in which $\eta_2$ is allowed 
to be individual specific. The Klein-Spady estimator is implemented with the {\tt{npindexbw}} and 
{\tt{npindex}} functions of the {\tt{np}} package. We use a Gaussian kernel and the bandwidth is chosen 
based on optimizing the likelihood criteria for both parameters and the bandwidth through leave-one-out 
cross validation. The BFGS method was used for optimization with 20 randomly chosen starting initial values.
The Groeneboom-Hendrickx results were computed with the {\tt{GH}} function from the R package {\tt{RCBR}}.
}
\label{tab:singleindex}
\end{table}

\section{Conclusion}
Random coefficient binary response models estimated by the nonparametric maximum likelihood methods
of \citeasnoun{KW} as originally proposed by \citeasnoun{Cosslett83} and extended by \citeasnoun{Ichimura} 
offer a flexible alternative to established parametric binary response methods potentially revealing
new sources of preference heterogeneity.  Modern convex optimization 
methods combined with recent advances in the algebraic geometry of hyperplane arrangements provide efficient 
computational techniques for the implementation of these methods.  Extensions of these methods to 
multinomial response, and causal models with endogenous covariates offer challenging opportunities
for further investigation. 

\section{Acknowledgements}
The authors would like to express their appreciation to Frederico Ardila for his guidance toward the
relevant combinatorial geometry literature, and to Steve Cosslett, Hide Ichimura and Yuichi Kitamura
for their pioneering work on the random coefficient binary response model. 
Thomas Stringham provided very capable research assistance.  We also would like to
thank Ismael Mourifie, Yuanyuan Wan and Stanislav Volgushev for useful discussions.

\bibliography{rcbr}

\newpage
\appendix
\section{Proofs}
{\small{
\begin{proof}[Proof of Theorem \ref{thm:N1}] 
Suppose the NPMLE assigned positive probability mass $p$ to such an $C_j$. By re-assigning the probability 
mass $p$ to the cell in the set $N_j$ with a larger count the likelihood could be strictly increased,
hence $p$ must be zero. 
\end{proof}

\begin{proof}[Proof of Theorem \ref{thm: zone}]
		Theorem 2.1 in \citeasnoun{Edelsbrunner93} with $k = 0$, yields,
		\[
		h \leq \binom{d-1}{0}\binom{n}{d-1} + \sum_{0 \leq j < d-1}\binom{j}{0}\binom{n}{j} = 
		\sum_{i=0}^{d-1} \binom{n}{i}.
		\]
\end{proof}

\begin{proof}[Proof of Theorem \ref{thm: identification}]
Given the model $y_i= 1\{w_i^\top \theta_0 +\beta_{1i} +  z_i^\top \beta_{-1i} \geq v_i\}$, 
we denote the random variable $\tilde U = W ^\top \theta_0 + \beta_{1i} + Z^\top \beta_{-1i}$. 
Under Assumptions \ref{Assump1} and \ref{Assump3} we can identify the conditional distribution of $\tilde U$ given 
$(W, Z) = (w, z)$ for all values of $(w, z)$ on its support. Fix $Z$ at some value $z$ and take values 
$w_1 \neq w_2$, we thereby identify $\theta_0$. To identify the distribution of $\beta_i$, 
consider that the characteristic function of $\tilde U| (W, Z) $ for all $t \in \mathbb{R}$ 
and any values of $(w,z)$ on its support is 
\[
	\phi_{\tilde U|(W,Z)}(t|w, z) = \mathbb{E}(e^{it\tilde U}| (W, Z) = (w, z)) 
	 = e^{itw^\top \theta_0} \mathbb{E}(e^{i(t\tilde z)^\top \beta}) 
	 = e^{itw^\top \theta_0} \phi_{\beta}(t\tilde z)
\]
where $\tilde z = \{1, z\}^\top$ and second equality holds under Assumption \ref{Assump1}. 
Under Assumption \ref{Assump3} and the fact that $\theta_0$ is already identified, 
the characteristic function $\phi_\beta$ is revealed by varying $\tilde z$ and 
hence the distribution of $\beta_i$ is identified.  This is essentially similar to the classical argument for
the Cram\'er-Wold device.
\end{proof}

\begin{proof}[Proof of Lemma \ref{lemma: continuity}]
For all $F \in \mathcal{F}_0$, since $w^\top \theta$ is continuous in $\theta$, then there exists an $N_1 > 0$ 
such that for $n \geq  N_1$, $| \mathbb{P}_F(H(x, w, \theta_n))-\mathbb{P}_F(H(x, w, \theta))| < \epsilon$. 
By the Portmanteau Theorem, for any $F \in \mathcal{F}_0$, we also have that $F_n \to F$ implies 
$\mathbb{P}_{F_n}(H(x, w, \theta)) \to \mathbb{P}_{F}(H(x, w,\theta))$. That means there exists 
$N_2 >0$ such that for $n \geq N_2$, $|\mathbb{P}_{F_n}(H(x, w, \theta)) - \mathbb{P}_{F}(H(x, w, \theta))| < \epsilon$. 
Therefore, $|\mathbb{P}_{F_n}(H(x, w, \theta_n))-\mathbb{P}_F(H(x, w, \theta))| < 2\epsilon$ for $n$ large enough. 
Since $\epsilon$ is arbitrary, it follows that 
\[
\underset{\gamma_n \to \gamma}{\lim} \mathbb{P}_{F_n}(H(x, w, \theta_n)) = \mathbb{P}_{F}(H(x, w, \theta))
\]
almost surely. 

\end{proof}

\begin{proof}[Proof of Lemma \ref{lemma: integrability}]
Since $0 \leq \mathbb{P}_{F}(H(x, w, \theta))\leq 1$ for all $(\theta, F)\in \Theta \times \mathcal{F}$, we have 
\[
p(1, x, w, \Gamma_{\epsilon}(\gamma))/p(1, x, w, \gamma^*)  \leq 1/\mathbb{P}_{F_0}(H(x, w, \theta_0)
\]
and
\[
p(0, x, w, \Gamma_{\epsilon}(\gamma))/p(0, x, w,  \gamma^*)  \leq 1/(1-\mathbb{P}_{F_0}(H(x, w, \theta_0))
\]
and consequently,
\begin{align*}
	\mathbb{E^*}\{[\log \{p(y,x, w, \Gamma_{\epsilon}(\gamma))/ & p(y, x, w, \gamma^*)\}]^+\}  \leq 
	\int  \Big\{-\mathbb{P}_{F_0}(H(z, w, \theta_0) \log \mathbb{P}_{F_0}(H(z, w, \theta_0)\\ 
	& - (1-\mathbb{P}_{F_0}(H(z, w, \theta_0)) \log (1-\mathbb{P}_{F_0}(H(z, w, \theta_0))\Big\} dG(z) < \infty
\end{align*}
where $G$ denotes the joint distribution of $(x,w)$. 
\end{proof}

\begin{proof}[Proof of Theorem \ref{thm: consistency}]
Let $\gamma \neq \gamma^*$. Note that $\log \{p(y, x, w, \Gamma_{\epsilon}(\gamma))/p(y, x, w, \gamma^*)\}$ 
is a monotone increasing function of $\epsilon$.  Lemma \ref{lemma: continuity} implies that 
$\underset{\epsilon \downarrow 0}{\lim} p(y, x, w, \Gamma_{\epsilon}(\gamma)) = p(y, x, w, \gamma)$, 
and dominated convergence implies that 
\[
\underset{\epsilon\downarrow0}{\lim} E^*\{[\log\{p(y,x, w,\Gamma_{\epsilon}(\gamma))/p(y,x, w, \gamma^*)\}]^+\} 
= E^*\{[\log \{ p(y,x, w,\gamma)/p(y,x, w,\gamma^*)\}]^+\}.
\]
Lemma \ref{lemma: continuity} and Fatou's Lemma then imply that 
\[
\underset{\epsilon\downarrow0}{\liminf} E^*\{[\log\{p(y,x, w,\Gamma_{\epsilon}(\gamma))/p(y,x, w, \gamma^*)\}]^-\}
\geq  E^*\{[\log \{ p(y,x, w,\gamma)/p(y,x, w,\gamma^*)\}]^-\}.
\]
Monotonicity of $\log \{p(y, x, w, \Gamma_{\epsilon}(\gamma))/p(y, x, w, \gamma^*)\}$ in $\epsilon$ 
ensures that the limit exists, so, 
\[
\underset{\epsilon\downarrow0}{\lim} E^*[\log\{p(y,x, w,\Gamma_{\epsilon}(\gamma))/p(y,x, w, \gamma^*)\}] 
\leq E^*[\log \{ p(y,x, w,\gamma)/p(y,x, w,\gamma^*)\}]<0.
\]
Strict inequality holds by Jensen's inequality and the identification result in 
Theorem \ref{thm: identification}.  Thus, for any $\gamma \neq \gamma^*$, there exists 
$\epsilon_{\gamma}>0$ such that 
	\[
	E^*[\log \{p(y,x, w,\Gamma_{\epsilon_{\gamma}}(\gamma))/p(y,x, w,\gamma^*)\}] < 0
	\]
	
Now for any $\epsilon>0$, the complementary set $(\Gamma_{\epsilon}(\gamma^*))^c$ is compact and 
is covered by $\cup_{\gamma \in (\Gamma_{\epsilon}(\gamma^*))^c} \Gamma_{\epsilon_{\gamma}}(\gamma)$, 
hence there exists a finite subcover, $\Gamma_1, \Gamma_2, \dots, \Gamma_J$ such that for each $j$,
\[
E^*[\log \{p(y,x, w,\Gamma_j)/p(y,x, w,\gamma^*)\}] <0.
\]
By the strong law of large numbers, as $n \to \infty$, 
\begin{align*}
\underset{\gamma \in (\Gamma_{\epsilon}(\gamma^*))^c}{\sup} \frac{1}{n}\sum_{i} 
\log \{p(y_i,x_i, w_i,\gamma) & /p(y_i,x_i, w_i, \gamma^*)\}\\
& \leq \underset{j=1,\dots,J}{\max} \frac{1}{n}\sum_i \log \{p(y_i,x_i, w_i,\Gamma_j)/p(y_i, x_i, w_i, \gamma^*)\}\\ 
& \overset{a.s}{\rightarrow} \max_{j=1,\dots,J} E^*[\log \{p(y, x, w, \Gamma_j)/p(y, x, w, \gamma^*)\}] < 0.
\end{align*}
Since $\epsilon$ is arbitrarily chosen, this implies that $(\hat \theta_n, \hat F_n) \to (\theta_0, F_0)$ almost surely when $n \to \infty$. 
Note that we allow $E^*[\log \{p(y, x, w, \Gamma_j)/p(y, x, w, \gamma^*)\}]$ to be minus infinity when invoking 
the strong law based on the following argument. For any $j$, 
\begin{align*}
	& \frac{1}{n}\sum_i \log \{p(y_i,x_i, w_i,\Gamma_j)/p(y_i, x_i, w_i, \gamma^*)\} \\
	& = \frac{1}{n}\sum_i \Big[\log \{p(y_i,x_i, w_i,\Gamma_j)/p(y_i, x_i, w_i, \gamma^*)\}\Big]^+ - \frac{1}{n}\sum_i \Big[\log \{p(y_i,x_i, w_i,\Gamma_j)/p(y_i, x_i, w_i, \gamma^*)\}\Big]^{-}.
\end{align*}
Lemma \ref{lemma: integrability} implies that 
\[
\frac{1}{n}\sum_i \Big[\log \{p(y_i,x_i, w_i,\Gamma_j)/p(y_i, x_i, w_i, \gamma^*)\}\Big]^+ \overset{a.s.}{\rightarrow} E^*\{[\log \{p(y_i,x_i, w_i,\Gamma_j)/p(y_i, x_i, w_i, \gamma^*)\}]^+\}.
	\]
It remains to show that 
\[
\frac{1}{n}\sum_i \Big[\log \{p(y_i,x_i, w_i,\Gamma_j)/p(y_i, x_i, w_i, \gamma^*)\}\Big]^{-} 
\overset{a.s.}{\rightarrow} E^*\{[\log \{p(y,x, w,\Gamma_j)/p(y,x, w,\gamma^*)\}]^{-}\}. 
\]
Suppose the right hand side is finite, then we can invoke the strong law.  
Alternatively, suppose $E^*\{[\log \{p(y,x, w,\Gamma_j)/p(y,x, w,\gamma^*)\}]^{-}\} = \infty$. 
Denote the random variable 
\[
\mathcal{R} := [\log \{p(y,x, w,\Gamma_j)/p(y,x, w,\gamma^*)\}]^{-},
\]
we have $\mathcal{R}\geq 0$ and $E \mathcal{R} = \infty$. Let $M$ be a constant, 
we have $E[\min\{\mathcal{R}, M\}] < \infty$ if $M < \infty$ and 
$E[\min\{\mathcal{Z}, M\}]  \to \infty$ as $M \to \infty$. Then for every $M$, 
	\[
	\frac{1}{n} \sum_i \mathcal{Z}_i \geq \frac{1}{n} 
	\sum_i \min \{\mathcal{Z}_i, M\} \overset{a.s.}{\rightarrow} E[\min \{\mathcal{Z}, M\}].
	\]
Assembling the foregoing, we have, 
	\[
	\frac{1}{n}\sum_i \Big[\log \{p(y_i,x_i, w_i,\Gamma_j)/p(y_i, x_i, w_i, \gamma^*)\}\Big]^{-} \overset{a.s.}{\rightarrow} E^*\{[\log \{p(y,x, w,\Gamma_j)/p(y,x, w,\gamma^*)\}]^{-}\}
	\]
	as required.
\end{proof}

\section{A Uniqueness Counterexample}\label{app.nonu}

\begin{figure}[bp]
    \begin{center}
    \resizebox{.65\textwidth}{!}{\includegraphics{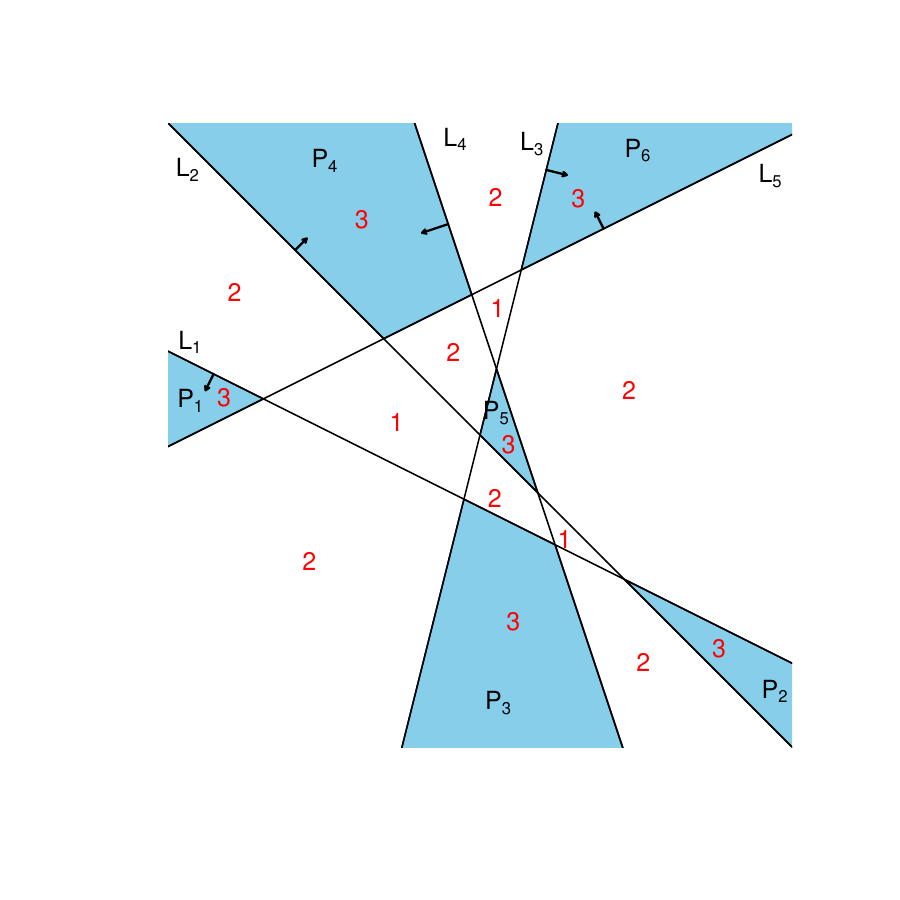}}
    \end{center}
    \caption{\small Non-uniqueness of the NPMLE:  Cell probabilities need not be uniquely
    determined by the NPMLE, despite the fact that half-space probabilties are unique.  The
    example illustrates 5 observations represented by the lines in the figure with orientation
    determined by the corresponding response values by the arrows.  Of the 16 polygons only 6
    are locally maximal and all 6 have the same cell count as indicated by the red numbers in
    the figure.  The NPMLE assigns equal weight to each of the 5 observations, but the mass
    assigned by the NPMLE to the 6 locally maximal polygons is not uniquely determined.
    }
    \label{fig.nonu}
\end{figure}

Uniqueness of the half-space probabilties produced by the NPMLE follows immediately from
the fact that a strictly concave objective is maximized over a convex set, however the
uniqueness of the cell probabilities is, as noted in the text, a more delicate matter.
In the context of interval censoring, \citeasnoun{GG94} suggest that non-uniqueness is
possible, and \citeasnoun{Lindsay.95} also suggests this possibility in the context of
general mixture models.  Neither, however, provides an example of this phenomenon,
so we felt it would be worthwhile to describe an explicit example where non-uniqueness occurs
in the context of our binary response model.  

Our example, illustrated in Figure \ref{fig.nonu}, involves 5 observations. 
In the notation of our prior exposition we have covariates
$z = (0.5, 1, -4, 3, -0.5)^\top$, and $v = (-0.25, 0, 0, 0.4, 0.5)^\top$ and response
$y = (0,1,1,0,0)^\top$. In the figure
the 5 lines are labeled $L_i: 1, \ldots , 5$, they yield a tessellation of the plane
into $1 + 5 + \binom{5}{2} = 16$ polygons since the lines are in general position.
The orientations of the half-spaces are determined by the $y$ observations and are represented
by the arrows in the upper portion of the figure.  Counts for each polygon are 
thus easily evaluated and appear in red;  there are 6 locally optimal cells shaded blue,
and labeled $P_j: j = 1, \ldots , 6$,
each with a count of 3.  The remaining neighboring cells need not be considered
further by the NPMLE.  The log-likelihood can be expressed as,
\begin{align*}
L(p) & = \sum_{i=1}^5 \log(q_i(p)) \\
&  = \log (p_1 + p_2 + p_3) + \log (p_2 + p_4 + p_5 + p_6) + \log (p_2 + p_3 + p_5 + p_6) \\
& + \log (p_1 + p_3 + p_4 + p_5) +  \log (p_1 + p_4 + p_6),
\end{align*}
and the NPMLE solves,
\[
\max_p \{ L(p) | Ap = q, p \in \SS_5 \}
\]
where $A$ is the Boolean matrix, 
\[
A = \begin{pmatrix}
    1 & 1 & 1 & 0 & 0 & 0\\
    0 & 1 & 0 & 1 & 1 & 1\\
    0 & 1 & 1 & 0 & 1 & 1\\
    1 & 0 & 1 & 1 & 1 & 0\\
    1 & 0 & 0 & 1 & 0 & 1
\end{pmatrix}
.
\]
It can be shown that the optimal solution puts equal mass on each of the 5 observations, i.e.,
$\hat q_i = 0.6$ for $i = 1, \ldots , 5$, however it can be also easily verified that any
$\hat p = (0.4 - k, 0.2, k, k, 0.2-k, 0.2)^\top$ for $k \in [0, 0.2]$ generates this solution.

It may also be noted that the Manski maximum score estimator for this example
consists of all 6 disconnected, shaded regions of Figure \ref{fig.nonu}, the most central of
which can be eliminated in favor of putting extra mass on polygon number 1 for the NPMLE.
Finally, it is important to observe that such examples with tied cell counts and $M > n$
are somewhat pathological.  In our experience the number of active cells, $M$, in the NPMLE
solution is generally less than $n$ and the cell probabilities are then uniquely determined from 
the half-space probabilities provided that the reduced $A$ matrix corresponding to the active
columns of the solution has full column rank.

\section{Computational Details}
All of the figures and tables presented here can be reproduced in the R language, \citeasnoun{R}, with code
provided by the second author.  Full algorithmic details and documentation are available in the R package
{\tt{RCBR}}, of \citeasnoun{RCBR}, which in turn relies upon the R packages {\tt{REBayes}} and {\tt{Rmosek}}
of \citeasnoun{REBayes} and \citeasnoun{Rmosek} respectively, and the Mosek Optimization Suite of
\citeasnoun{Mosek}.

\section{Supplementary Tables}
Two supplementary tables are provided in this section.  Table \ref{tab:GKTuning} reports log likelihood values for 
various choices of the tuning parameters of the Gautier-Kitamura estimator for the modal choice application.  The
contour plots for the Gautier-Kitamura estimates appearing in the main text are based on tuning parameters
maximizing log likelihood as reported in this table.  Table \ref{tab:car012}  reports the location and mass of the
NPMLE estimates for each subsample of the modal choice data;  only points with mass greater than 0.001 are
reported.  Note, once again, that locations are arbitrary interior points within the polygons optimizing
the log likelihood.
\begin{table}[!tbp]
\begin{center}
\begin{tabular}{lrrrrr}
\hline\hline
\multicolumn{1}{l}{T}&\multicolumn{1}{c}{TX =  3}&\multicolumn{1}{c}{TX =  5}&\multicolumn{1}{c}{TX =  10}&\multicolumn{1}{c}{TX =  15}&\multicolumn{1}{c}{TX =  20}\tabularnewline
\hline
{\bfseries 0  Cars}&&&&&\tabularnewline
~~1&$ -39.33$&$ -41.27$&$ -40.94$&$ -41.54$&$ -41.13$\tabularnewline
~~2&$ -39.16$&$ -40.28$&$ -41.23$&$ -40.18$&$ -39.13$\tabularnewline
~~3&$ -39.54$&$ -40.64$&$ -40.75$&$ -39.51$&$ -39.07$\tabularnewline
~~4&$ -40.45$&$ -41.12$&$ -40.47$&$ -40.13$&$ -40.25$\tabularnewline
~~5&$ -41.29$&$ -41.83$&$ -41.02$&$ -41.19$&$ -41.58$\tabularnewline
~~7&$ -41.96$&$ -43.46$&$ -43.73$&$ -42.81$&$ -42.56$\tabularnewline
~~9&$ -42.81$&$ -45.48$&$ -47.13$&$ -45.96$&$ -45.25$\tabularnewline
\hline
{\bfseries 1  Cars}&&&&&\tabularnewline
~~1&$-223.33$&$-158.55$&$-178.90$&$-195.51$&$-163.35$\tabularnewline
~~2&$-179.44$&$-153.06$&$-157.35$&$-165.20$&$-153.68$\tabularnewline
~~3&$-145.51$&$-146.91$&$-162.93$&$-165.86$&$-166.01$\tabularnewline
~~4&$-148.49$&$-151.01$&$-162.31$&$-162.19$&$-180.44$\tabularnewline
~~5&$-151.37$&$-154.96$&$-169.23$&$-168.51$&$-198.44$\tabularnewline
~~7&$-161.38$&$-170.41$&$-191.81$&$-206.85$&$-229.93$\tabularnewline
~~9&$-168.29$&$-179.99$&$-198.82$&$-211.48$&$-230.02$\tabularnewline
\hline
{\bfseries 2  Cars}&&&&&\tabularnewline
~~1&$ -91.20$&$-164.31$&$-167.78$&$-126.56$&$ -93.70$\tabularnewline
~~2&$ -89.88$&$-126.42$&$-172.32$&$-143.76$&$-115.74$\tabularnewline
~~3&$-110.17$&$-141.13$&$-185.35$&$-165.72$&$-142.57$\tabularnewline
~~4&$-128.33$&$-152.74$&$-188.94$&$-162.99$&$-143.09$\tabularnewline
~~5&$-135.02$&$-154.10$&$-177.19$&$-148.85$&$-132.89$\tabularnewline
~~7&$-135.22$&$-146.45$&$-155.38$&$-132.70$&$-123.55$\tabularnewline
~~9&$-137.42$&$-145.49$&$-155.21$&$-139.99$&$-128.56$\tabularnewline
\hline
\end{tabular}
\caption{Log-likelihood of the Gautier-Kitamura estimator for various values of
the Fourier-Laplace series truncation parameters.}
\label{tab:GKTuning}
\end{center}
\end{table}

\begin{table}[!tbp]
\begin{center}
\begin{tabular}{rrrcrrrcrrr}
\hline\hline
\multicolumn{3}{c}{\bfseries No Car}&\multicolumn{1}{c}{\bfseries }&\multicolumn{3}{c}{\bfseries One Car}&\multicolumn{1}{c}{\bfseries }&\multicolumn{3}{c}{\bfseries Two Cars}\tabularnewline
\cline{1-3} \cline{5-7} \cline{9-11}
\multicolumn{1}{c}{$\eta_1$}&\multicolumn{1}{c}{$\eta_2$}&\multicolumn{1}{c}{p}&\multicolumn{1}{c}{}&\multicolumn{1}{c}{$\eta_1$}&\multicolumn{1}{c}{$\eta_2$}&\multicolumn{1}{c}{p}&\multicolumn{1}{c}{}&\multicolumn{1}{c}{$\eta_1$}&\multicolumn{1}{c}{$\eta_2$}&\multicolumn{1}{c}{p}\tabularnewline
\hline
$-1.4300$&$ 0.0429$&$0.2743$&&$  0.6917$&$ 0.1217$&$0.1300$&&$ 1.1200$&$ 0.0000$&$0.5000$\tabularnewline
$-0.6625$&$-0.6700$&$0.1955$&&$  0.8446$&$-0.0008$&$0.1153$&&$ 1.1550$&$ 0.1400$&$0.2533$\tabularnewline
$-1.0942$&$ 0.0830$&$0.1194$&&$  9.7600$&$-0.2400$&$0.0999$&&$ 0.4495$&$ 0.0580$&$0.0918$\tabularnewline
$ 0.3900$&$ 0.0400$&$0.1099$&&$  0.6666$&$ 0.0081$&$0.0999$&&$ 0.4540$&$ 0.0143$&$0.0777$\tabularnewline
$-0.8019$&$ 0.0628$&$0.0757$&&$  0.6790$&$ 0.0430$&$0.0875$&&$-0.8889$&$ 0.0928$&$0.0254$\tabularnewline
$-1.0500$&$ 0.0520$&$0.0680$&&$ -0.1271$&$ 0.0385$&$0.0717$&&$-0.1170$&$ 0.0157$&$0.0216$\tabularnewline
$-8.8900$&$ 1.0750$&$0.0512$&&$  0.2500$&$ 0.0800$&$0.0624$&&$ 0.6216$&$-0.0045$&$0.0160$\tabularnewline
$-7.1750$&$ 0.9850$&$0.0482$&&$-12.7050$&$ 1.0350$&$0.0538$&&$-0.1280$&$ 0.0455$&$0.0123$\tabularnewline
$-0.1994$&$-0.0078$&$0.0437$&&$  0.9422$&$-0.0441$&$0.0475$&&$ 0.4450$&$ 0.0175$&$0.0019$\tabularnewline
$-0.4663$&$ 0.2275$&$0.0086$&&$ -0.0081$&$ 0.0588$&$0.0415$&&$$&$$&$$\tabularnewline
$-0.3177$&$ 0.0018$&$0.0055$&&$  0.5825$&$ 0.0650$&$0.0407$&&$$&$$&$$\tabularnewline
$$&$$&$$&&$ -1.0050$&$ 0.0967$&$0.0362$&&$$&$$&$$\tabularnewline
$$&$$&$$&&$  0.7086$&$ 0.0042$&$0.0346$&&$$&$$&$$\tabularnewline
$$&$$&$$&&$ 14.6300$&$-0.4400$&$0.0291$&&$$&$$&$$\tabularnewline
$$&$$&$$&&$ -0.2789$&$ 0.0536$&$0.0271$&&$$&$$&$$\tabularnewline
$$&$$&$$&&$  0.1650$&$ 0.0942$&$0.0196$&&$$&$$&$$\tabularnewline
$$&$$&$$&&$  0.8411$&$-0.0077$&$0.0027$&&$$&$$&$$\tabularnewline
$$&$$&$$&&$$&$$&$$&&$$&$$&$$\tabularnewline
$$&$$&$$&&$$&$$&$$&&$$&$$&$$\tabularnewline
$$&$$&$$&&$$&$$&$$&&$$&$$&$$\tabularnewline
\hline
\end{tabular}
\caption{Mass points of the estimated distribution of coefficients for commuters:  
    The first two columns in each panel indicate interior points of cells containing the estimated mass given by
    the third column of each panel.  Only mass points with mass greater than 0.001 are displayed.\label{tab:car012}}\end{center}
\end{table}

}}
\end{document}